\def\FossacsPreprint{1}
\documentclass[runningheads]{llncs}
\usepackage[T1]{fontenc}
\usepackage{amsmath,amssymb}
\usepackage{stmaryrd}
\usepackage{mathpartir}
\usepackage{booktabs}
\usepackage{tikz}
\usetikzlibrary{arrows.meta}
\usepackage[hidelinks]{hyperref}
\hypersetup{pdftitle={An Explicit Ordinal Bound for System T Dialogue Trees}}
\AddToHook{env/thebibliography/begin}{\raggedright\urlstyle{same}}
\newcommand{\N}{\mathbb{N}}
\newcommand{\Tree}[1]{\mathcal{D}(#1)}
\newcommand{\eps}{\varepsilon_0}
\newcommand{\tow}[1]{\theta_{#1}}
\newcommand{\den}[1]{\llbracket #1\rrbracket}
\newcommand{\dend}[1]{\llbracket #1\rrbracket^{D}}
\newcommand{\deni}[1]{\llbracket #1\rrbracket^{\infty}}
\newcommand{\lit}[1]{\underline{#1}}
\newcommand{\caseop}{\mathsf{case}}
\newcommand{\query}{\mathsf{query}}
\newcommand{\Rec}{\mathsf{Rec}}
\newcommand{\Zero}{\mathsf{0}}
\newcommand{\Succ}{\mathsf{succ}}
\newcommand{\gen}{\mathsf{gen}}
\newcommand{\pass}[1]{P_{#1}}
\newcommand{\nclass}[1]{\mathcal{N}_{#1}}
\newcommand{\bind}{\mathbin{>\!\!>\!\!=}}
\newcommand{\ordjoin}{\mathbin{\vee}}
\newcommand{\lev}{\operatorname{lev}}

\title{An Explicit Ordinal Bound for System T Dialogue Trees}
\titlerunning{An Explicit Ordinal Bound for System T Dialogue Trees}
\ifdefined\FossacsPreprint
\author{Mingkun Xiao\inst{1, \star}
\and Yixuan Sun\inst{2,\star}}
\authorrunning{M. Xiao and Y. Sun}
\institute{Capital Normal University\\
\email{1230604016@cnu.edu.cn}
\and
Kean University\\
\email{sunyixua@kean.edu}}
\hypersetup{pdfauthor={Mingkun Xiao; Yixuan Sun}}

\hypersetup{pdfsubject={Preprint}}
\else
\author{Anonymous Author(s)}
\authorrunning{Anonymous Author(s)}
\institute{}
\hypersetup{pdfauthor={Anonymous Author(s)},
  pdfsubject={FoSSaCS 2027 working draft}}
\fi
\begin{document}
\maketitle
\begin{abstract}
Escard\'o's dialogue interpretation assigns to each closed term
$t:(\iota\to\iota)\to\iota$ of G\"odel's System~T a well-founded,
countably branching tree $D(t)$, where $\iota$ is the natural-number type.
We give a direct proof that its classical ordinal height is below $\eps$.
More precisely, we compute a natural number $K(t)\ge2$ from the type levels
occurring in the source term and prove $h(D(t))<\tow{K(t)}$,
where $\tow0=\omega$ and
$\tow{n+1}=\omega^{\tow n}$.
Our proof translates recursors into closed infinitary templates and
eliminates $\beta$-redexes by a finite sequence of passes indexed by
ordinary type level. The translation and every pass preserve the dialogue
denotation exactly. An auxiliary rank $\rho$ satisfies an additive
substitution bound; each pass sends rank $\alpha$ to at most $2^\alpha$.
Combining these estimates with a computable initial bound $\omega+m(t)$
and a dialogue-height bound $2^{\rho(N)}$ for closed ground normal forms
$N$ yields the stated tower bound. A semantics-preserving translation
transfers the result to Escard\'o's original combinatory interpretation.
We formalise the proof in Agda over classical ordinals under explicit
foundational assumptions.

\keywords{System T \and Dialogue trees \and Ordinal analysis \and Infinitary terms \and Agda}
\end{abstract}
\section{Introduction}
\label{sec:problem}

A higher-order program can interact with its input by calling a function
supplied as an argument. For a functional $F:(\N\to\N)\to\N$, the answer
returned by its input $\alpha$ may determine the next query. A dialogue tree
records these interactions: internal nodes specify queries, branches
represent possible answers, and leaves record final outputs.
The numerical result alone does not determine this tree.

For well-founded trees, each path is finite, but their lengths need not
have a common finite bound. Consider a tree that first queries $\alpha(0)$
and, on receiving $n$, performs $n$ further queries before returning zero.
Giving leaves height zero and each node the supremum of the successors
of its child heights gives this tree height $\sup_n(n+1)=\omega$;
see~\eqref{eq:tree-height}. We ask which bounds on ordinal height follow
from restrictions on the program that generates the tree.

G\"odel's System~T~\cite{goedel1958} extends the simply typed
$\lambda$-calculus over the natural-number type $\iota$ with primitive
recursion at all finite types.
Escard\'o's dialogue interpretation assigns a specified
well-founded tree $D(t)$ to every closed term
$t:(\iota\to\iota)\to\iota$~\cite{escardo2013}.
Following this tree with an input $\alpha$ returns the value of $t$ in
the usual functional interpretation. This representation yields continuity
information; subsequent work also internalises the tree representation
in System~T~\cite{escardo2025}. Kawai~\cite{kawai2019} internalises the
interpretation in finite-type intuitionistic arithmetic using neighbourhood functions,
which determine outputs from finite input prefixes. Xu~\cite{xu2020}
gives a framework of translations and logical relations for continuity
and related properties, formalised in Agda. We study the ordinal height of
the tree assigned by the fixed dialogue interpretation.
Example~\ref{ex:omega-height} gives a System~T term whose specified
dialogue also has height $\omega$.

The ordinal $\eps$, familiar from ordinal analyses of finite-type
recursion~\cite{tait1965,howard1980}, is a natural candidate for a common
bound. With
\[
 \tow0=\omega,\qquad \tow{n+1}=\omega^{\tow n},
 \qquad \eps=\sup_n\tow n,
\]
it is the supremum of the finite $\omega$-exponential towers.

Escardó’s 2013 paper states the $\eps$ height bound as a corollary
without giving the corresponding ordinal argument~\cite{escardo2013}. More recently,
Escardó et al. explicitly formulated the $\eps$ bound as a natural conjecture~\cite{escardo2025}.
We give a direct proof of this bound for the specified dialogue interpretation,
together with an explicit finite-tower bound computable from the source term.

\begin{theorem}[Source-dependent height bound]
\label{thm:main}
There is a syntactically computable function $K$ from closed System~T terms
$t:(\iota\to\iota)\to\iota$ to natural numbers at least two such that
\[
 h(D(t))<\tow{K(t)}<\eps.
\]
\end{theorem}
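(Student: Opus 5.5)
The plan follows the strategy outlined in the abstract and has four stages.

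\emph{Stage 1 (translation).} I first fix an infinitary term language with $\lambda$-abstraction, application, the query constant and an $\omega$-branching $\caseop$ on natural-number literals. Each recursor $\Rec_\sigma$ is replaced by a closed infinitary template. Concretely, this is a term that cases on its numeral argument and, in the $n$-th branch, unfolds the step function $n$ times. I will give a dialogue denotation $\dend{\cdot}$ for infinitary terms and check by induction on $t$ that the translation preserves $D(t)$ exactly. The key point is that the template's denotation at a numeral $\lit n$ agrees with the recursor's, and that the dialogue monad's bind/$\caseop$ structure commutes with the $\caseop$ node. On this language I define a well-founded rank $\rho$ by recursion on term structure. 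A natural choice is $\rho(\caseop(t,(s_n)_n))=\sup_n(\rho(s_n)+1)\ordjoin\rho(t)+1$, with application and abstraction adding at most one to the natural sum. I then compute the initial bound $\rho(t^\infty)\le\omega+m(t)$, where $m(t)$ is a natural number read off from the size of $t$. This works because each template is a $\caseop$ over branches of finite but unbounded rank, so its rank is exactly $\omega$ plus a constant.

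\emph{Stage 2 (substitution and passes).} I prove the additive substitution bound $\rho(s[u/x])\le\rho(s)+\rho(u)$, using a natural-sum style argument. The induction is on $s$, and the variable case is where the bound is attained. Let $L$ be the maximal type level occurring in $t$, computable from the recursor types and the subterm types. Pass $\pass{k}$, for $k$ from $L$ down to $1$, contracts all $\beta$-redexes whose abstraction has type of level $k$. I perform it as a single structural recursion that contracts the redexes innermost-first. Contracting a level-$k$ redex creates only redexes of level $<k$, which is the standard Gentzen/Turing argument. So after all passes, every remaining redex has level $0$, and those are eliminated by the final pass. Each pass preserves $\dend{\cdot}$ because $\beta$ is sound for the dialogue semantics. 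The main obstacle is proving $\rho(\pass{k}(s))\le 2^{\rho(s)}$. I plan to prove this by induction on $s$ with a strengthened invariant: for the parallel-contracted result, $\rho\le 2^{\rho(s)}$, and the application case $\rho(\pass k(a)\,[\pass k(b)])\le 2^{\rho(a)}+2^{\rho(b)}\le 2^{\max+1}$ uses the additive substitution lemma. The delicate part is that contracting a redex inside $\pass k(a)$ may expose a new level-$k$ redex. This does not happen if the pass is defined as a normalisation with respect to level-$k$ redexes on terms where all higher levels have already been removed. I would follow Schwichtenberg's proof of the $2_k$ bound, transferred to ordinal ranks via the infinitary $\caseop$ clause, where supremum commutes with $2^{(\cdot)}$ by continuity.

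\emph{Stage 3 (normal forms and the tower).} For a closed normal form $N$ of ground type built from $\query$, $\caseop$ and literals, I show $h(\dend{N})\le 2^{\rho(N)}$, in fact $h\le\rho$, by induction on $N$. Composing the stages gives
\[h(D(t))\le 2^{2^{\cdots^{\omega+m(t)}}}\]
with $L+1$ exponentials. Since $2^{\omega+m}=\omega\cdot 2^m<\omega^2$, and $2^{\alpha}\le\omega^{\alpha}$ with $\omega^{\omega^{\alpha}}$ bounding iterates, an induction shows that the $j$-fold iterate starting from $\omega+m$ lies below $\tow{j+1}$. Setting $K(t)=\max(2,L+2)$ then gives $h(D(t))<\tow{K(t)}<\eps$. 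Finally, I transfer the result to Escard\'o's combinatory interpretation by checking that the bracket-abstraction translation yields the same dialogue tree, so $D(t)$ is unchanged.
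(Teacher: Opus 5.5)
The proposal has the paper's architecture: closed templates of rank $\omega$ plus a constant, an additive substitution bound, one structural pass per level with $\rho\mapsto2^{\rho}$, and one more exponential for ground normal forms. Your looser witness $K(t)=\max(2,L+2)$ via $E_2^j(\omega+m)<\tow{j+1}$ is also fine. There are, however, two genuine gaps.

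The first is in degree elimination. For ordinary type level $\lev(\sigma\to\tau)=\max\{\lev(\sigma)+1,\lev(\tau)\}$, the codomain can have the same level as the arrow: $\iota\to(\iota\to\iota)\to\iota$ and its codomain both have level $2$. So in $((\lambda x^\iota.\lambda y^{\iota\to\iota}.y\,x)\,\lit{0})\,(\lambda z^\iota.z)$, contracting the inner level-$2$ redex exposes another level-$2$ redex, although no higher level is present. Your claims that a contraction creates only lower-level redexes, and that exposure ``does not happen'' once higher levels are gone, hold only if ``level'' means type height. If your innermost-first pass tests the original function $a$ of $a\,b$, its output is not in $\nclass{k}$, and the count of one pass per level fails. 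The paper instead tests the \emph{processed} function. If $\pass{k}(a)=\lambda x^\sigma.U'$ has degree exactly $k$, the pass returns $U'[\pass{k}(b)/x]$, including when this abstraction was only exposed in the same pass. This works for three reasons:
\begin{itemize}
\item exposure gives degree at most $k$ (Lemma~\ref{lem:exposed-abstraction});
\item $U'$ and $\pass{k}(b)$ lie in $\nclass{k}$ by induction;
\item substitution creates only redexes of domain level $\lev(\sigma)<k$ (Lemma~\ref{lem:substitution-degree}).
\end{itemize}
So the result lies in $\nclass{k}$. The rank bound is unaffected, because it uses only the inductive bound on $\pass{k}(a)$. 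Indexing passes by type height would also work, at the cost of more passes.

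The second gap is your Stage~3 aside $h\le\rho$, which is false because selection composes heights sequentially. Take $Q_0=\lit{0}$, $Q_{n+1}=\query_0(k\mapsto Q_n)$ and $q=\query_0(n\mapsto Q_n)$. Then $\rho(q)\le\omega+1$ and $h(\deni{q})=\omega$. But $\caseop(q;\,n\mapsto q)$ has rank at most $\omega+3$ under your definition and height $\omega\cdot2$. The selection case of $h\le2^{\rho}$ needs the grafting bound $h(d\bind g)\le b+h(d)$ whenever every $h(g(n))\le b$, together with $2^S+2^S=2^{S+1}$. This is exactly where the final exponential comes from.

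Three smaller points:
\begin{itemize}
\item With ordinary addition, the substitution bound must read $\rho(u)+\rho(s)$, not $\rho(s)+\rho(u)$; take $s=\lambda y.x$ and $\rho(u)=\omega$.
\item Application rank must use the join rather than a natural sum; take $s=x\,x$.
\item If your $\caseop$ is ground, the template at $\sigma\to\tau$ needs the $\eta$-expanded selection $\lambda x.\caseop_\tau(M;\,n\mapsto M_n\,x)$. A primitive higher-type $\caseop$ would instead leave closed ground $\beta$-normal applications headed by a selection, which your normal-form grammar omits.
\end{itemize}
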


The parameter $K(t)$ is obtained from the levels of the types occurring
in the source term and its instantiated recursors. For each fixed type-level ceiling,
the theorem gives a uniform ordinal bound on the resulting dialogue trees.
The proof gives a finer bound by iterating base-two ordinal exponentiation
$K(t)+1$ times from a computable ordinal $\omega+m(t)$;
see~\eqref{eq:fine-height-bound}.

The specified interpretation matters to the proof. The introductory tree
returns zero on every input and has height $\omega$,
whereas a single leaf returning zero has height zero. Thus correctness of a
transformation at the level of numerical results is insufficient.
The dialogue interpretation also fails to validate unrestricted source
recursor-conversion equations on effectful arguments~\cite{escardo2025}.
Our construction maintains exact dialogue-tree equality
throughout its transformations.

Our analysis follows the infinitary-term approach of Tait~\cite{tait1965}
and Martin-L\"of~\cite{martinlof1972}.
Howard~\cite[\S3]{howard1980} establishes $\eps$-bounds for computation trees and
trees of unsecured sequences associated with finite-type terms.
Wilken and Weiermann~\cite{wilken2012} classify reduction lengths for
System~T and its fragments. Connecting these analyses to $D(t)$
requires a comparison of tree representations. Our translation and
normalisation passes preserve $D(t)$ exactly.

We translate the source into well-founded, countably branching terms.
Each recursor becomes a closed template of finite iterates. Keeping its
parameters bound gives a uniform initial rank estimate; actual arguments
enter through application. Passes indexed by ordinary type level then
eliminate $\beta$-redexes. Strict domain descent controls redexes created
by substitution. Codomain levels do not increase: an abstraction exposed
at the current level is contracted by the enclosing application in the
same pass. The finite source supplies a common type ceiling for all
branches, and hence a finite number of passes.

The ordinal rank $\rho$ satisfies
$\rho(M[\vartheta])\le\beta+\rho(M)$ for substitution by terms of rank
at most $\beta$. One pass sends rank $\alpha$ to at most $2^\alpha$,
and the dialogue height of a closed ground normal form $N$ is at most
$2^{\rho(N)}$. These estimates yield the bounds above.

The proof is formalised in Agda~\cite{norell2007} using
TypeTopology~\cite{typetopology}, with the classical ordinal model's
assumptions explicit. Section~\ref{sec:dialogue} fixes the dialogue
interpretation. Sections~\ref{sec:translation}--\ref{sec:rank} develop the
translation, normalisation, and bounds; Section~\ref{sec:combinatory}
transfers the result to the original combinatory interpretation.
Section~\ref{sec:mechanisation} describes the mechanisation.

\section{System T and Dialogue Semantics}
\label{sec:dialogue}

We fix the $\lambda$-calculus presentation of System~T and its dialogue
interpretation~\cite{escardo2013,escardo2025}. We use classical ordinal
arithmetic under the assumptions stated in Section~\ref{sec:mechanisation}.

\subsection{System T}

The simple types are generated by
\[
 \sigma,\tau ::= \iota \mid \sigma\to\tau,
\]
where $\iota$ is the type of natural numbers. A context $\Gamma$ is a finite
list of distinct variables with assigned types, and $\Gamma\vdash t:\sigma$
means that $t$ has type $\sigma$ in that context. Terms are formed from
variables, abstraction $\lambda x^\sigma.t$, application $t\,u$, and the
constants
\begin{equation}\label{eq:t-constants}
 \begin{aligned}
 \Zero&:\iota,\qquad \Succ:\iota\to\iota,\\
 \Rec_\sigma&:(\iota\to\sigma\to\sigma)\to\sigma\to\iota\to\sigma.
 \end{aligned}
\end{equation}
Arrows associate to the right and application to the left. We use the
usual typing rules, identify terms up to renaming of bound variables,
and write $t[u/x]$ for capture-avoiding substitution. A term is closed
if it has no free variables; we consider such terms in the empty context.

The usual interpretation $\den{\cdot}$ takes $\iota$ to $\N$, arrows to
function spaces, and the constants to zero, successor, and primitive
recursion. Here $\iota$ is a source type and $\N$ the natural numbers
of the metatheory.

\subsection{Dialogue Trees}

For a set $X$, the type $\Tree X$ of dialogue trees is inductively generated by
\[
 \eta(x)\quad(x\in X),\qquad
 \beta_i(\varphi)\quad(i\in\N,\ \varphi:\N\to\Tree X).
\]
A leaf $\eta(x)$ returns $x$. A node $\beta_i(\varphi)$ queries position
$i$ and continues with $\varphi(n)$ when the answer is $n$.
Inductive generation makes the trees well-founded and supports structural induction;
it does not impose a common finite bound on branch depth.

Given an input function $\alpha:\N\to\N$, evaluation returns an element
of $X$ by
\begin{equation}\label{eq:tree-evaluation}
 \begin{aligned}
 \operatorname{eval}(\eta(x),\alpha)&=x,\\
 \operatorname{eval}(\beta_i(\varphi),\alpha)
   &=\operatorname{eval}(\varphi(\alpha(i)),\alpha).
 \end{aligned}
\end{equation}
For $d\in\Tree\N$, the height $h(d)$ is the classical ordinal defined by
\begin{equation}\label{eq:tree-height}
 h(\eta(n))=0,\qquad
 h(\beta_i(\varphi))=\sup_{n\in\N}\bigl(h(\varphi(n))+1\bigr).
\end{equation}
It ignores the leaf labels and query positions. The supremum includes
every branch of the inductive tree; evaluation follows only answers
consistent with the chosen $\alpha$.

For $d\in\Tree X$ and $g:X\to\Tree Y$, grafting replaces each leaf
$\eta(x)$ of $d$ by $g(x)$. We write it as $d\bind g$ and define
\begin{equation}\label{eq:tree-bind}
 \begin{aligned}
 \eta(x)\bind g&=g(x),\\
 \beta_i(\varphi)\bind g
   &=\beta_i\bigl(n\mapsto\varphi(n)\bind g\bigr).
 \end{aligned}
\end{equation}
In particular, for $f:X\to Y$, leaf relabelling is
$\operatorname{map}(f,d)=d\bind(x\mapsto\eta(f(x)))$.
Tree equality throughout means equality in this inductive type;
agreement of evaluations alone does not identify trees.

\subsection{Dialogue Interpretation}

The dialogue values at each source type are
\begin{equation}\label{eq:dialogue-values}
 V_\iota=\Tree\N,\qquad
 V_{\sigma\to\tau}=V_\sigma\to V_\tau.
\end{equation}
Only ground values are trees. A valuation $\gamma$ for $\Gamma$ assigns
$\gamma(x)\in V_\sigma$ to each declaration $x:\sigma$ in $\Gamma$.
For fresh $x:\sigma$ and $a\in V_\sigma$, write $\gamma[x\mapsto a]$
for its extension assigning $a$ to $x$.

For $\Gamma\vdash t:\sigma$, its dialogue interpretation
$\dend t_\gamma\in V_\sigma$ is defined by recursion on $t$.
The variable, abstraction, and application clauses are
\begin{equation}\label{eq:dialogue-lambda}
 \begin{aligned}
 \dend x_\gamma&=\gamma(x),\\
 \dend{\lambda x^\sigma.t}_\gamma(a)
   &=\dend t_{\gamma[x\mapsto a]},\qquad a\in V_\sigma,\\
 \dend{t\,u}_\gamma&=\dend t_\gamma(\dend u_\gamma).
 \end{aligned}
\end{equation}
We suppress the valuation on constants and closed terms. For $d\in\Tree\N$,
zero and successor are interpreted by
\begin{equation}\label{eq:dialogue-zero-succ}
 \dend\Zero=\eta(0),\qquad
 \dend\Succ(d)=\operatorname{map}(n\mapsto n+1,d).
\end{equation}

To interpret recursion at an arbitrary result type, extend grafting to
dialogue values. For $d\in\Tree X$ and $g:X\to V_\sigma$, define
$d\bind_\sigma g\in V_\sigma$ by induction on $\sigma$:
\begin{equation}\label{eq:typed-bind}
 \begin{aligned}
 d\bind_\iota g&=d\bind g,\\
 (d\bind_{\sigma\to\tau}g)(a)
   &=d\bind_\tau(x\mapsto g(x)(a)),\qquad a\in V_\sigma.
 \end{aligned}
\end{equation}
This generalised Kleisli extension~\cite[\S3]{escardo2013} acts pointwise
at function types and uses~\eqref{eq:tree-bind} at ground type.

Let $f:\Tree\N\to V_\sigma\to V_\sigma$ and $a\in V_\sigma$.
Ordinary recursion on the natural number $n$ defines $v_n\in V_\sigma$ by
\[
 v_0=a,\qquad v_{n+1}=f(\eta(n))(v_n).
\]
For $d\in\Tree\N$, set
\begin{equation}\label{eq:dialogue-rec}
 \dend{\Rec_\sigma}(f)(a)(d)=d\bind_\sigma(n\mapsto v_n).
\end{equation}
The index is passed to $f$ as the leaf $\eta(n)$. At ground result type,
each leaf labelled $n$ in $d$ is replaced by $v_n$; at function types,
grafting acts pointwise.

The generic input $\gen:\Tree\N\to\Tree\N$ is
\begin{equation}\label{eq:generic-input}
 \gen(d)=d\bind\bigl(i\mapsto\beta_i(n\mapsto\eta(n))\bigr).
\end{equation}
It replaces each leaf $\eta(i)$ by a query at position $i$ returning the
answer. For closed $t:(\iota\to\iota)\to\iota$, its specified tree is
\begin{equation}\label{eq:specified-dialogue}
 D(t)=\dend t(\gen)\in\Tree\N.
\end{equation}
The established correctness theorem~\cite{escardo2013,escardo2025} states that, for every
$\alpha:\N\to\N$,
\[
 \operatorname{eval}(D(t),\alpha)=\den t(\alpha).
\]
\begin{example}[A dialogue of height $\omega$]\label{ex:omega-height}
Consider the closed term
\[
 t_\omega=\lambda\alpha^{\iota\to\iota}.\,
 \Rec_\iota(\lambda i^\iota.\lambda r^\iota.\alpha(r))
 \,\Zero\,(\alpha(\Zero)).
\]
It reads $\alpha(0)=n$ and iterates $\alpha$ $n$ times from $0$.
The dialogue step applies $\gen$ to the previous result, ignoring the
index. Since $\gen(\eta(0))=\beta_0(n\mapsto\eta(n))$,
\eqref{eq:dialogue-rec} gives
\[
 v_0=\eta(0),\quad v_{n+1}=\gen(v_n),\quad
 D(t_\omega)=\beta_0(n\mapsto v_n).
\]
Each application of $\gen$ adds one query at every leaf, so induction
on $n$ shows that every root-to-leaf path in $v_n$ has exactly $n$ query nodes.
Consequently,
\[
 h(v_n)=n,\qquad h(D(t_\omega))=\sup_n(n+1)=\omega.
\]
A finite term with a single recursor at ground result type thus produces
a dialogue of infinite ordinal height.
\end{example}

The interpretation is sensitive to the syntax of recursion. For source
terms $s:\iota\to\sigma\to\sigma$, $a:\sigma$, and $u:\iota$, the usual
recursor reduction
\[
 \Rec_\sigma\,s\,a\,(\Succ\,u)
   \longrightarrow s\,u\,(\Rec_\sigma\,s\,a\,u)
\]
need not preserve dialogue trees: the right-hand side can repeat the
queries made by $u$~\cite{escardo2025}. Our later transformations therefore
use the recursor interpretation~\eqref{eq:dialogue-rec} and
$\beta$-contraction in the translated language, with tree equality as
their semantic invariant.

\section{Infinitary Translation}
\label{sec:translation}

We expand recursors into finite iterates, following
Tait~\cite{tait1965} and Martin-L\"of~\cite{martinlof1972}, while keeping
abstraction and application explicit. Add $\Omega:\iota\to\iota$ to the
source with $\dend\Omega=\gen$, so that $\dend{t\,\Omega}=D(t)$.

In Figure~\ref{fig:translation-overview}, source and auxiliary terms
share the object-language types $\sigma$; $\N$, $\Tree\N$, and $V_\sigma$
are value spaces in the metatheory. All arrows are metatheoretic maps.
The templates and finite iterates below are auxiliary terms.

\begin{figure}[ht]
\centering
\begin{tikzpicture}[
  font=\small,
  value/.style={align=center,inner sep=3pt},
  syntax note/.style={align=center,font=\scriptsize,anchor=south,inner sep=0pt},
  map/.style={-{Stealth[length=4pt,width=3pt]},line width=.45pt},
  edge label/.style={font=\footnotesize,inner sep=2pt,fill=white}
]
\node[value] (source) at (0,0) {$t$};
\node[value] (extended) at (2.6,0) {$t\,\Omega$};
\node[value] (translated) at (5.2,0) {$M_t$};
\node[value] (normal) at (7.8,0) {$N_t$};
\node[syntax note,yshift=7pt] at (source.north)
  {System T\\[-1pt]$(\iota\to\iota)\to\iota$};
\node[syntax note,yshift=7pt] at (extended.north)
  {source with $\Omega$\\[-1pt]$\iota$};
\node[syntax note,yshift=7pt] at (translated.north)
  {infinitary\\[-1pt]$\iota$};
\node[syntax note,yshift=7pt] at (normal.north)
  {infinitary, $\beta$-normal\\[-1pt]$\iota$};

\node[value] (dialogue-value) at (0,-1.9)
  {$\dend t$\\[-1pt]$\in V_{(\iota\to\iota)\to\iota}$};
\node[value] (tree) at (5.2,-1.9)
  {$D(t)\in\Tree\N$};
\node[value] (functional) at (0,-3.45)
  {$\den t$\\[-1pt]$\in(\N\to\N)\to\N$};
\node[value] (result) at (5.2,-3.45)
  {$\den t(\alpha)\in\N$};

\draw[map] (source.east) -- node[edge label,above=2pt]
  {$(-)\,\Omega$} (extended.west);
\draw[map] (extended.east) -- node[edge label,above=2pt]
  {$\widehat{(-)}$} (translated.west);
\draw[map] (translated.east) -- node[edge label,above=2pt]
  {$P_1\circ\cdots\circ P_{K(t)}$} (normal.west);

\draw[map] (source.south) -- node[edge label,right]
  {$\dend{-}$} (dialogue-value.north);
\draw[map] (extended.south) -- node[edge label,left]
  {$\dend{-}$} (tree.north west);
\draw[map] (translated.south) -- node[edge label,right]
  {$\deni{-}$} (tree.north);
\draw[map] (normal.south) -- node[edge label,right]
  {$\deni{-}$} (tree.north east);
\draw[map] (dialogue-value.east) -- node[edge label,above]
  {$(-)(\gen)$} (tree.west);

\draw[map] (source.west) .. controls (-2,-.75) and (-2,-2.7) ..
  node[edge label,left,pos=.52] {$\den{-}$} (functional.west);
\draw[map] (functional.east) -- node[edge label,above]
  {$(-)(\alpha)$} (result.west);
\draw[map] (tree.south) -- node[edge label,right]
  {$\operatorname{eval}(-,\alpha)$} (result.north);
\end{tikzpicture}
\caption{Syntax and semantics for a fixed closed term $t$ and input
$\alpha:\N\to\N$. The upper row contains terms; the lower rows contain
semantic values. The paths to $D(t)$ agree by
Theorem~\ref{thm:translation} and semantic preservation of normalisation.
The paths to $\den t(\alpha)$ agree by evaluation correctness
(Section~\ref{sec:dialogue}).}
\label{fig:translation-overview}
\end{figure}

\subsection{Infinitary Terms}

The auxiliary terms $M,N$ use the same types and contexts as the source
language. They are inductively generated by variables, abstraction, and
application with their usual typing rules, together with
\[
 \lit n,\qquad \query_i(n\mapsto M_n),\qquad
 \caseop(M;\,n\mapsto M_n).
\]
Here $i,n\in\N$, and $n\mapsto M_n$ denotes a family of terms indexed
by the natural numbers of the metatheory.
Literals have type $\Gamma\vdash\lit n:\iota$ in any context.
The two branching constructs are typed by
\begin{mathpar}
 \inferrule{\Gamma\vdash M_n:\iota\quad(n\in\N)}
   {\Gamma\vdash\query_i(n\mapsto M_n):\iota}
 \and
 \inferrule{\Gamma\vdash M:\iota \\ \Gamma\vdash M_n:\iota\quad(n\in\N)}
   {\Gamma\vdash\caseop(M;\,n\mapsto M_n):\iota}.
\end{mathpar}
Both constructs have ground result type and all branches share a context.
Selection at function types is defined below using abstraction.

These well-founded syntax trees may have countably many children with
unbounded finite depths; structural induction has one hypothesis per child.
We identify bound variables up to renaming. Capture-avoiding substitution
extends to the selector and every branch.

For $\Gamma\vdash M:\sigma$, define $\deni M_\gamma\in V_\sigma$
by structural recursion, using the value spaces and valuations of
Section~\ref{sec:dialogue}. Variables, abstraction, and application are
interpreted as in~\eqref{eq:dialogue-lambda}. The new clauses are
\begin{equation}\label{eq:infinitary-semantics}
 \begin{aligned}
 \deni{\lit n}_\gamma&=\eta(n),\\
 \deni{\query_i(n\mapsto M_n)}_\gamma
   &=\beta_i(n\mapsto\deni{M_n}_\gamma),\\
 \deni{\caseop(M;\,n\mapsto M_n)}_\gamma
   &=\deni M_\gamma\bind(n\mapsto\deni{M_n}_\gamma).
 \end{aligned}
\end{equation}
Selection retains the selector's queries and grafts branch denotations
onto its leaves. These structural definitions do not presuppose normalisation.

\subsection{Translation}

For $\Gamma\vdash M:\iota$ and $\Gamma\vdash M_n:\sigma$ for every
$n\in\N$, define a derived term
$\Gamma\vdash\caseop_\sigma(M;\,n\mapsto M_n):\sigma$
by induction on $\sigma$:
\begin{equation}\label{eq:derived-selection}
 \begin{aligned}
 \caseop_\iota(M;\,n\mapsto M_n)
   &=\caseop(M;\,n\mapsto M_n),\\
 \caseop_{\sigma\to\tau}(M;\,n\mapsto M_n)
   &=\lambda x^\sigma.\caseop_\tau(M;\,n\mapsto M_n\,x).
 \end{aligned}
\end{equation}
The variable $x$ is fresh for $\Gamma$; the selector and branches on the
right are regarded as terms in the extended context. At a function type,
selection takes an argument and applies each candidate function to it
before selecting at the result type.

\begin{lemma}[Derived selection]\label{lem:derived-selection}
For the terms above and every valuation $\gamma$ for $\Gamma$,
\[
 \deni{\caseop_\sigma(M;\,n\mapsto M_n)}_\gamma
 =\deni M_\gamma\bind_\sigma(n\mapsto\deni{M_n}_\gamma).
\]
\end{lemma}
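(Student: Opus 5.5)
We argue by induction on the type $\sigma$, with the statement generalised over the context $\Gamma$, the valuation $\gamma$, the selector $M$ and the family $n\mapsto M_n$. This generalisation is needed because the inductive step moves to the extended context $\Gamma,x:\sigma$ and replaces the branches $M_n$ by $M_n\,x$.

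In the base case $\sigma=\iota$, the derived selection is literally $\caseop(M;\,n\mapsto M_n)$. The clause for $\caseop$ in~\eqref{eq:infinitary-semantics} gives $\deni M_\gamma\bind(n\mapsto\deni{M_n}_\gamma)$, and this is $\deni M_\gamma\bind_\iota(n\mapsto\deni{M_n}_\gamma)$ by the first clause of~\eqref{eq:typed-bind}.

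For $\sigma\to\tau$ we prove equality of functions in $V_{\sigma\to\tau}=V_\sigma\to V_\tau$ pointwise, so fix $a\in V_\sigma$. Write $\gamma'=\gamma[x\mapsto a]$. By the abstraction clause of~\eqref{eq:dialogue-lambda}, the left-hand side at $a$ equals $\deni{\caseop_\tau(M;\,n\mapsto M_n\,x)}_{\gamma'}$. The induction hypothesis at $\tau$, applied in context $\Gamma,x:\sigma$ with valuation $\gamma'$, rewrites this as $\deni M_{\gamma'}\bind_\tau(n\mapsto\deni{M_n\,x}_{\gamma'})$. By the application clause, $\deni{M_n\,x}_{\gamma'}=\deni{M_n}_{\gamma'}(a)$.

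The remaining ingredient is a weakening lemma: since $x$ is fresh for $\Gamma$, for any $\Gamma\vdash N:\rho$ we have $\deni N_{\gamma[x\mapsto a]}=\deni N_\gamma$. This is proved by a routine structural induction over infinitary terms, with one hypothesis per branch for $\query$ and $\caseop$. Under the identification of bound variables, the abstraction case may assume the bound variable differs from $x$. Applying weakening to $M$ and to each $M_n$, the expression becomes $\deni M_\gamma\bind_\tau(n\mapsto\deni{M_n}_\gamma(a))$. By the second clause of~\eqref{eq:typed-bind}, this is exactly $(\deni M_\gamma\bind_{\sigma\to\tau}(n\mapsto\deni{M_n}_\gamma))(a)$.

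We conclude by function extensionality, which is also needed implicitly in the second clause of~\eqref{eq:typed-bind} and is available in the stated foundational setting. The only step beyond unfolding definitions is the weakening lemma for the extended context, including its infinitely branching cases. We expect it to be the main (if modest) obstacle, since it is where the precise treatment of contexts, and in a mechanisation of renaming, must be handled.
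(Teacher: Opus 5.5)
Your proof is correct and follows the same route as the paper. You argue by induction on $\sigma$. The ground case comes from the $\caseop$ clause of \eqref{eq:infinitary-semantics}. At $\sigma\to\tau$ you evaluate at an arbitrary $a\in V_\sigma$, apply the induction hypothesis at $\tau$ in the extended context, use freshness of $x$, and finish with the second clause of \eqref{eq:typed-bind}.

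Where you go beyond the paper, it is only in making explicit what it leaves implicit: the generalisation over $\Gamma$, $\gamma$ and the branch family, the weakening lemma behind ``freshness of $x$'', and the final appeal to function extensionality.
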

\begin{proof}
By induction on $\sigma$. The ground case
is~\eqref{eq:infinitary-semantics}. For $\sigma\to\tau$, put
$d=\deni M_\gamma$ and $g(n)=\deni{M_n}_\gamma$.
Apply the left-hand side to any $a\in V_\sigma$.
By~\eqref{eq:derived-selection}, freshness of $x$, and the induction
hypothesis at $\tau$, the result is
$d\bind_\tau(n\mapsto g(n)(a))$.
Equation~\eqref{eq:typed-bind} identifies this with
$(d\bind_{\sigma\to\tau}g)(a)$, giving equality of the functions.
\qed
\end{proof}

For terms $f:\iota\to\sigma\to\sigma$ and $a:\sigma$ in a common
context, ordinary recursion on $n\in\N$ defines
\begin{equation}\label{eq:finite-iterates}
 I_0(f,a)=a,\qquad I_{n+1}(f,a)=f\,\lit n\,(I_n(f,a)).
\end{equation}
For example, $I_2(f,a)=f\,\lit1\,(f\,\lit0\,a)$.
The closed recursor template is
\begin{equation}\label{eq:recursor-template}
 R^\infty_\sigma
 =\lambda f^{\iota\to\sigma\to\sigma}\,a^\sigma\,z^\iota.
   \caseop_\sigma(z;\,n\mapsto I_n(f,a)),
\end{equation}
of type $(\iota\to\sigma\to\sigma)\to\sigma\to\iota\to\sigma$.
Its $n$th branch represents $n$ iterations of the step function, with
the indices $0,\ldots,n-1$ passed as literals.

The translation $u\mapsto\widehat u$ is defined by recursion on the
finite source syntax. Its constant clauses are
\begin{equation}\label{eq:translation-constants}
 \begin{aligned}
 \widehat{\Zero}&=\lit0,\qquad
   \widehat{\Rec_\sigma}=R^\infty_\sigma,\\
 \widehat{\Succ}&=\lambda z^\iota.\caseop(z;\,n\mapsto\lit{n+1}),\\
 \widehat\Omega&=\lambda z^\iota.\caseop
   (z;\,i\mapsto\query_i(n\mapsto\lit n)).
 \end{aligned}
\end{equation}
The successor template relabels leaves, and the oracle template grafts
a query at each returned position. The remaining clauses preserve the
$\lambda$-calculus structure:
\[
 \widehat x=x,\qquad
 \widehat{\lambda x^\sigma.u}=\lambda x^\sigma.\widehat u,\qquad
 \widehat{u\,v}=\widehat u\,\widehat v.
\]
In particular,
\[
 \widehat{\Rec_\sigma\,s\,a\,u}
   =R^\infty_\sigma\,\widehat s\,\widehat a\,\widehat u.
\]
Each $I_n(f,a)$ contains only variables, literals, and applications.
Actual step and base arguments remain outside the closed template until
$\beta$-contraction. Section~\ref{sec:rank} bounds its initial rank
independently of those arguments, then combines the estimates through
the finite source syntax.

\subsection{Semantic Preservation}

\begin{theorem}[Semantic preservation]\label{thm:translation}
If $\Gamma\vdash u:\sigma$ in the source language extended by $\Omega$,
then $\Gamma\vdash\widehat u:\sigma$ in the auxiliary language, and
for every valuation $\gamma$ for $\Gamma$,
\[
 \deni{\widehat u}_\gamma=\dend u_\gamma.
\]
\end{theorem}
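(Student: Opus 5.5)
The proof goes by structural induction on the typing derivation of $u$, proving typing and semantic equality together for every context $\Gamma$ and every valuation $\gamma$. Typing preservation is immediate clause by clause: the translation of each constant is checked against~\eqref{eq:t-constants}, and for $\Rec_\sigma$ we note that $I_n(f,a):\sigma$ by induction on $n$ in the context $f,a,z$, so the derived selection is well typed at $\sigma$.

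For the semantic equality, the variable, abstraction and application cases follow from the induction hypotheses. Both interpretations use the same clauses~\eqref{eq:dialogue-lambda}. In the abstraction case we apply the hypothesis at the extended valuation $\gamma[x\mapsto a]$ for every $a\in V_\sigma$ and then use function extensionality. All remaining work is in the closed constants, whose denotations do not depend on $\gamma$.

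\textbf{Constants other than the recursor.} For $\Zero$ the claim is the literal clause. For $\Succ$, applying $\widehat\Succ$ to $d$ gives $d\bind(n\mapsto\eta(n+1))$ by~\eqref{eq:infinitary-semantics}, and this is $\operatorname{map}(n\mapsto n+1,d)$ by definition. For $\Omega$ we get $d\bind(i\mapsto\beta_i(n\mapsto\eta(n)))=\gen(d)$, which is~\eqref{eq:generic-input}. Each of these is a pointwise equality of functions on $\Tree\N$, closed off by extensionality.

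\textbf{The recursor.} This is the one case needing care. Fix $f\in V_{\iota\to\sigma\to\sigma}$, $a\in V_\sigma$ and $d\in\Tree\N$, and let $\delta$ be the valuation $f,a,z\mapsto f,a,d$. By Lemma~\ref{lem:derived-selection},
\[
 \deni{R^\infty_\sigma}(f)(a)(d)=d\bind_\sigma\bigl(n\mapsto\deni{I_n(f,a)}_\delta\bigr).
\]
It then suffices to show $\deni{I_n(f,a)}_\delta=v_n$ for every $n$, where $v_n$ is the sequence in~\eqref{eq:dialogue-rec}. We argue by induction on $n$. At $n=0$ both sides are $a$. At $n+1$, the application and literal clauses give $f(\eta(n))(\deni{I_n(f,a)}_\delta)$, and the induction hypothesis turns this into $f(\eta(n))(v_n)=v_{n+1}$. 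The recursor clause~\eqref{eq:dialogue-rec} then matches, and extensionality in $f,a,d$ finishes the case.

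The main obstacle is this recursor case, specifically making sure the literal $\lit n$ passed to the step function denotes exactly the leaf $\eta(n)$ that~\eqref{eq:dialogue-rec} uses. It does, since $\deni{\lit n}=\eta(n)$. We also need Lemma~\ref{lem:derived-selection} to supply the typed grafting $\bind_\sigma$ at higher $\sigma$, rather than only the ground clause.
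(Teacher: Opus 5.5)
Your proposal is correct and follows the paper's proof essentially step for step. You prove type preservation by induction, read off the zero, successor and $\Omega$ cases directly from~\eqref{eq:infinitary-semantics}, and handle the recursor by showing $\deni{I_n(f,a)}_\delta=v_n$ by induction on $n$ and then applying Lemma~\ref{lem:derived-selection}; the only cosmetic difference is that you reuse $f,a$ for both the template variables and their semantic values, where the paper writes $F,A$.
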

\begin{proof}
The template types and the common typing rules give type preservation
by induction on $u$. For semantic preservation, first consider the
recursor template. Fix $F\in V_{\iota\to\sigma\to\sigma}$,
$A\in V_\sigma$, and $d\in\Tree\N$.
Let $v_0=A$ and $v_{n+1}=F(\eta(n))(v_n)$, as in the source recursor
interpretation. In a valuation $\delta$ assigning $f,a,z$ the values
$F,A,d$, respectively, induction on $n$ in~\eqref{eq:finite-iterates}
gives
\[
 \deni{I_n(f,a)}_\delta=v_n.
\]
The base case is the interpretation of $a$; the successor case applies
$F$ to $\eta(n)$ and to the induction hypothesis.
Lemma~\ref{lem:derived-selection} now yields
\[
 \deni{R^\infty_\sigma}(F)(A)(d)
   =d\bind_\sigma(n\mapsto v_n)
   =\dend{\Rec_\sigma}(F)(A)(d).
\]
Thus the two recursor interpretations agree at every argument.
For zero, successor, and $\Omega$, the clauses
of~\eqref{eq:translation-constants} give precisely
\eqref{eq:dialogue-zero-succ} and~\eqref{eq:generic-input}.
Finally, structural induction on $u$ completes the proof: variables
use the same valuation, application preserves equality, and abstraction
uses the induction hypothesis in each extended valuation.
\qed
\end{proof}

For a closed source term $t:(\iota\to\iota)\to\iota$, set
\begin{equation}\label{eq:translated-dialogue}
 M_t=\widehat{t\,\Omega},
 \qquad \deni{M_t}=\dend{t\,\Omega}=D(t).
\end{equation}
The term $M_t$ is closed and has ground type. The equality is in the
inductive type of dialogue trees and retains query positions and every
answer branch. Section~\ref{sec:normalisation} transforms $M_t$ to a
$\beta$-normal term while preserving this denotation.

\section{Normalisation}
\label{sec:normalisation}

We transform $M_t$ into a closed ground $\beta$-normal term $N_t$ with
$\deni{N_t}=D(t)$. The finite source supplies a common type bound;
structural passes eliminate one redex degree at a time.
Section~\ref{sec:rank} estimates the resulting rank and dialogue height.

\subsection{Type Levels and the Source Ceiling}

We use ordinary type level, as in Martin-L\"of~\cite[\S4]{martinlof1972}:
\begin{equation}\label{eq:type-degree}
 \lev(\iota)=0,\qquad
 \lev(\sigma\to\tau)=\max\{\lev(\sigma)+1,\lev(\tau)\}.
\end{equation}
Thus $\lev(\sigma)<\lev(\sigma\to\tau)$ and
$\lev(\tau)\le\lev(\sigma\to\tau)$. The degree of a redex
$(\lambda x^\sigma.U)\,V$, with $U:\tau$, is
$\lev(\sigma\to\tau)$. For each context, type, and $r\in\N$, let $\nclass r$
be the class of auxiliary terms whose redexes all have degree strictly below $r$,
including those under abstractions and in every branch. We suppress
the context and type in this notation. Membership is inherited by
subterms. Since every arrow type has positive level, $M\in\nclass1$
means exactly that $M$ is $\beta$-normal.

We compute the ceiling on the finite extended source syntax. Define
$B(u)\in\N$ as follows, with superscripts indicating the types of
subterms:
\begin{equation}\label{eq:source-ceiling}
 \begin{aligned}
  B(\Zero)&=0,\qquad B(\Succ)=B(\Omega)=1,\\
  B(x^\sigma)&=\lev(\sigma),\\
  B(\Rec_\sigma)
    &=\lev((\iota\to\sigma\to\sigma)\to\sigma\to\iota\to\sigma),\\
  B(\lambda x^\sigma.u^\tau)
    &=\max\{\lev(\sigma\to\tau),B(u)\},\\
  B(u^{\sigma\to\tau}\,v^\sigma)
    &=\max\{\lev(\tau),B(u),B(v)\}.
 \end{aligned}
\end{equation}
For the functional $t$ of interest, put
\begin{equation}\label{eq:source-K}
 K(t)=B(t\,\Omega).
\end{equation}
By definition, $K(t)\ge B(t)\ge\lev((\iota\to\iota)\to\iota)=2$.

\begin{lemma}[Uniform type ceiling]\label{lem:uniform-ceiling}
For every finite extended source term $u$, the result type of every
subterm of $\widehat u$ has level at most $B(u)$. In particular,
$M_t\in\nclass{K(t)+1}$.
\end{lemma}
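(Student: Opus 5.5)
We argue by structural induction on the finite extended source term $u$, proving the first claim: every subterm of $\widehat u$ has a type of level at most $B(u)$. Here ``subterm'' includes subterms under abstractions and in every branch of a selection. The second claim then follows from the redex degree definition. A redex $(\lambda x^\sigma.U)\,V$ with $U:\tau$ has degree $\lev(\sigma\to\tau)$, which is the level of the type of its subterm $\lambda x^\sigma.U$. So every redex in $M_t=\widehat{t\,\Omega}$ has degree at most $B(t\,\Omega)=K(t)$, and hence $M_t\in\nclass{K(t)+1}$.

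For the base cases we inspect the templates.

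\emph{Zero.} $\widehat{\Zero}=\lit0$ has level $0$.

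\emph{Successor and $\Omega$.} The subterms of $\widehat{\Succ}$ and $\widehat\Omega$ are the abstraction of type $\iota\to\iota$ (level $1$), the variable $z$, the selection, the literals and the queries. All of these except the abstraction have ground type, so the bound is $1=B(\Succ)=B(\Omega)$.

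\emph{Variables.} $\widehat{x^\sigma}=x$ has level $\lev(\sigma)=B(x^\sigma)$.

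\emph{Recursor.} This is the one substantive base case. Let $L=\lev((\iota\to\sigma\to\sigma)\to\sigma\to\iota\to\sigma)$. The subterms of $R^\infty_\sigma$ are as follows.
\begin{itemize}
\item The three nested abstractions have types $(\iota\to\sigma\to\sigma)\to\sigma\to\iota\to\sigma$, $\sigma\to\iota\to\sigma$ and $\iota\to\sigma$.
\item The variables $f,a,z$ have types $\iota\to\sigma\to\sigma$, $\sigma$ and $\iota$.
\item Unfolding $\caseop_\sigma$ via~\eqref{eq:derived-selection}, with $\sigma=\sigma_1\to\cdots\to\sigma_k\to\iota$, gives abstractions over fresh $x_j:\sigma_j$ at the types $\sigma_j\to\cdots\to\iota$. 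These are tails of $\sigma$, with level at most $\lev(\sigma)$. The variables $x_j$ have level below $\lev(\sigma)$, and the ground selection has level $0$.
\item The branch terms are $I_n(f,a)\,x_1\cdots x_j$, of tail type of $\sigma$.
\item Inside $I_n(f,a)$ the subterms are $f$, the partial applications $f\,\lit m$ of type $\sigma\to\sigma$, the terms $f\,\lit m\,(I_m(f,a))$ and the $I_m$ of type $\sigma$, and the literals.
\end{itemize}
Each of these types is a right-subtype (codomain tail) of the recursor type, or of one of its argument types. So each level is bounded by $L$, using $\lev(\tau)\le\lev(\sigma\to\tau)$ and $\lev(\sigma)<\lev(\sigma\to\tau)$ from~\eqref{eq:type-degree}. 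This is uniform in $n$, which matters because there are infinitely many branches. Hence the bound $L=B(\Rec_\sigma)$ holds.

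\emph{Abstraction.} For $\lambda x^\sigma.u^\tau$, the subterms of $\widehat{\lambda x.u}=\lambda x.\widehat u$ are the term itself, of level $\lev(\sigma\to\tau)$, and the subterms of $\widehat u$, bounded by $B(u)$ by induction.

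\emph{Application.} For $u^{\sigma\to\tau}\,v^\sigma$, the whole term has level $\lev(\tau)$, and the other subterms lie in $\widehat u$ or $\widehat v$. Since translation preserves types (Theorem~\ref{thm:translation}), the type of each subterm is the one recorded by the superscripts, so the induction hypotheses apply.

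The main obstacle I expect is bookkeeping in the recursor case. One must give a precise account of the subterms of the derived selection $\caseop_\sigma$ and of every infinitary branch $I_n(f,a)$ applied to the fresh variables, and show they all have tail types of the recursor type. I would first prove a small auxiliary fact: every subterm of $\caseop_\sigma(z;\,n\mapsto M_n)$ is either a subterm of some $M_n$, or the term $M_n\,x_1\cdots x_j$, or one of the fresh variables $x_j$, or the ground selection, or a $\lambda$-abstraction of tail type of $\sigma$. This fact is proved by induction on $\sigma$ following~\eqref{eq:derived-selection}. With it in hand, the rest is the case analysis above.
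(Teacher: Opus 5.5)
Your proof is correct and follows the paper's argument. It proceeds by structural induction on the source term, with the recursor template as the only substantive case: an auxiliary fact about derived selection (by induction on $\sigma$) and a bound on the iterates $I_n(f,a)$ that is uniform in $n$. The redex degrees are then read off from the types of abstraction subterms. The only difference is packaging: the paper states the selection step as a closure property (a bounded selector and bounded branches, together with $\lev(\sigma)\le r$, give a bounded derived selection), while you enumerate the subterms directly.
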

\begin{proof}
First consider the derived selection
$\caseop_\sigma(M;\,n\mapsto M_n)$. If the selector and all branches
have their subterm types bounded by $r$, and $\lev(\sigma)\le r$,
then so does the derived selection. This follows by induction on
$\sigma$ in~\eqref{eq:derived-selection}: at an arrow type
$\sigma_1\to\tau$, the new variable has type $\sigma_1$, the new applications
have result type $\tau$, and the outer abstraction has type
$\sigma_1\to\tau$. Both component levels are at most $\lev(\sigma_1\to\tau)$,
and weakening preserves all types.

For $R^\infty_\sigma$, take $r$ to be the level of its recursor
type. The types of $f,a,z$, and the result types of the three outer
abstractions, are subformulas of that type. Induction on $n$ shows that
every subterm of $I_n(f,a)$ has result type of level at most this same $r$: the
successor step adds a literal and applications with result types
$\sigma\to\sigma$ and $\sigma$. The derived-selection bound therefore
applies to the entire branch family. The successor and oracle templates
use only $\iota$ and $\iota\to\iota$, and zero is a ground literal.

Structural induction on the finite source term now proves the claimed
bound, using the maxima in~\eqref{eq:source-ceiling} at abstractions
and applications. Taking $u=t\,\Omega$, a redex of degree at least
$K(t)+1$ would have an abstraction subterm whose type violates the bound.
\qed
\end{proof}

\subsection{Eliminating One Degree}

For $r\in\N$, define $\pass r$ by structural recursion. It fixes variables
and literals and commutes with abstraction and the branching constructs:
\begin{equation}\label{eq:pass-structural}
 \begin{aligned}
 \pass r(\lambda x^\sigma.M)&=\lambda x^\sigma.\pass r M,\\
 \pass r(\query_i(n\mapsto M_n))
   &=\query_i(n\mapsto\pass r M_n),\\
 \pass r(\caseop(M;\,n\mapsto M_n))
   &=\caseop(\pass r M;\,n\mapsto\pass r M_n).
 \end{aligned}
\end{equation}
For an application $F\,V$, where $F:\sigma\to\tau$, first set
$F'=\pass r F$ and $V'=\pass r V$, and then put
\begin{equation}\label{eq:pass-application}
 \pass r(F\,V)=
 \begin{cases}
  U'[V'/x],&F'=\lambda x^\sigma.U'
             \text{ and }\lev(\sigma\to\tau)=r,\\
  F'\,V',&\text{otherwise}.
 \end{cases}
\end{equation}
All recursive calls of $\pass r$ are on original immediate subterms.
The contraction clause returns the substituted term directly;
substitution is itself structural. These operations preserve context
and type and act on every child of a branching node.

Substitution can create a redex by replacing a variable in function
position with an abstraction. We first state the condition that controls
this possibility. A typed simultaneous substitution $\vartheta$ from
$\Gamma$ to $\Delta$ assigns a term $\Delta\vdash\vartheta(x):\sigma$
to each declaration $x:\sigma$ of $\Gamma$; its action on $M$ is written
$M[\vartheta]$.

\begin{lemma}[Substitution and degree]\label{lem:substitution-degree}
Let $M\in\nclass r$, and suppose every image of $\vartheta$ belongs to $\nclass r$.
Assume that each image whose root is an abstraction has type of level
strictly below $r$. Then $M[\vartheta]\in\nclass r$.
\end{lemma}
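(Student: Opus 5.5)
We argue by structural induction on $M$, generalising over the substitution $\vartheta$ and its target context $\Delta$. Generalising is necessary because passing under an abstraction $\lambda y^\rho.M_0$ or into a branch changes the substitution. Under a binder we use the lifted substitution $\vartheta'$, which sends $y$ to $y$ and every other $x$ to the weakening of $\vartheta(x)$ into $\Delta,y{:}\rho$. Weakening, that is renaming, preserves the root constructor, the types and the redex degrees. The images of $\vartheta'$ therefore still lie in $\nclass r$, and $y$ is a variable rather than an abstraction. So the hypotheses are stable under lifting, and we prove the statement for all such $\vartheta$ at once.

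Most cases are routine. For a variable $x$, $M[\vartheta]=\vartheta(x)\in\nclass r$ by hypothesis. Literals are unchanged. For $\query_i$ and $\caseop$, membership is inherited by subterms, so the induction hypothesis applies to the selector and to every branch. The substituted term introduces no new root redex, since these constructs are not applications. For an abstraction, we apply the induction hypothesis to the body with the lifted substitution.

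The only real case is an application $M=F\,V$ with $F:\sigma\to\tau$. By the induction hypothesis, $F[\vartheta]$ and $V[\vartheta]$ lie in $\nclass r$. Every redex of $M[\vartheta]=F[\vartheta]\,V[\vartheta]$ other than the root lies inside one of these two terms, so it remains to check the root. If $F[\vartheta]$ is not an abstraction, there is no root redex. Otherwise we split on the shape of $F$.

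First, $F$ may itself be an abstraction. Then $F\,V$ is a redex of $M\in\nclass r$, so $\lev(\sigma\to\tau)<r$, and substitution does not change the type. Second, $F$ may be a variable $x$ with $\vartheta(x)$ an abstraction of type $\sigma\to\tau$. Then the extra hypothesis gives $\lev(\sigma\to\tau)<r$. Third, $F$ may be an application, a literal or a branching construct. Substitution preserves the root constructor of such terms, so $F[\vartheta]$ would not be an abstraction, which is a contradiction. In fact the branching constructs have ground type and so cannot occur here at all. In each possible case the root redex has degree below $r$, which completes the induction.

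The main obstacle is the bookkeeping in that last split: we must justify that substitution creates a new abstraction at the root only when the term was a variable. We also have to confirm that the lifted substitution satisfies the abstraction-level hypothesis. This holds because the bound variable maps to itself, which is not an abstraction, and because weakening preserves types.
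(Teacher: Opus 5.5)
Your proof is correct and takes essentially the same approach as the paper. Both use structural induction on $M$, lifting the substitution under binders (the fresh variable is not an abstraction, and weakening preserves types and constructors) and, at an application, a root check that splits on whether $F$ was an abstraction or a variable.
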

\begin{proof}
By structural induction on $M$. The variable case uses the hypothesis
on the corresponding image. Under an abstraction, lift the substitution
by leaving the fresh bound variable fixed and weakening the old images.
The fresh variable has no redex and is not an abstraction, regardless
of its type. Weakening changes neither types nor constructors, so both
hypotheses on the old images are preserved.

At an application $F\,V$, the induction hypotheses control the redexes
inside $F[\vartheta]$ and $V[\vartheta]$. If $F[\vartheta]$ has an
abstraction at its root, then, by typing, $F$ was either an abstraction or a variable:
substitution preserves an application root. In the first case, the degree
of the root redex is already below $r$ by $F\,V\in\nclass r$; in the
second, the root condition on $\vartheta$ gives this bound. Literals
have no redexes, and the query and selection cases use the induction
hypothesis for the selector, where present, and every branch.
\qed
\end{proof}

In particular, if $\Gamma,x:\sigma\vdash U:\tau$ and
$\Gamma\vdash V:\sigma$, the lemma gives
\begin{equation}\label{eq:single-substitution-degree}
 \lev(\sigma)<r,\quad U,V\in\nclass r
 \quad\Longrightarrow\quad U[V/x]\in\nclass r.
\end{equation}
Indeed, the image of $x$ has type $\sigma$, and all other variables are
left fixed. The root condition permits variables of arbitrarily high
type to remain variables, which is what makes lifting under binders
possible.

The second source of new redexes is a contraction in function position
that exposes an abstraction to an enclosing application. It is controlled
by the codomain of the contracted arrow.

\begin{lemma}[Exposed abstractions]\label{lem:exposed-abstraction}
Let $\Gamma\vdash F:\tau$. If $F$ does not have an abstraction at its
root but $\pass r F$ does, then $\lev(\tau)\le r$.
\end{lemma}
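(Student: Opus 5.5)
We argue by structural induction on $F$, reading the constructor of $F$ and the corresponding clause of $\pass r$. The goal is to locate where a root abstraction in $\pass r F$ can come from, given that $F$ itself has no abstraction at its root.

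The cases are eliminated as follows.

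\emph{Variables and literals.} These are fixed by $\pass r$, so their images have no abstraction root.

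\emph{Branching constructs.} By~\eqref{eq:pass-structural}, $\pass r$ sends $\query_i$ to $\query_i$ and $\caseop$ to $\caseop$. The root constructor is therefore preserved and is not an abstraction.

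\emph{Abstractions.} These are excluded by hypothesis.

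\emph{Applications.} The only remaining case is $F=G\,V$ with $G:\sigma\to\tau$. Set $G'=\pass r G$ and $V'=\pass r V$. In the ``otherwise'' clause of~\eqref{eq:pass-application} the result $G'\,V'$ has an application root, so we may assume the contraction clause fires. Then $G'=\lambda x^\sigma.U'$, $\lev(\sigma\to\tau)=r$, and $\pass r F=U'[V'/x]$.

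The key step is to show that $\lev(\tau)\le r$ in the contraction case. This follows directly from~\eqref{eq:type-degree}, which gives $\lev(\tau)\le\lev(\sigma\to\tau)=r$. The contraction clause only fires at degree exactly $r$, and the codomain of that arrow is the type of $F$. So no analysis of $U'[V'/x]$ is actually needed: whatever its root is, the bound holds because the contraction happened at all.

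The induction hypothesis is in fact unnecessary, since every case is settled by the root constructor and the defining clauses. The only step needing care is the application case. There one must note that the conclusion concerns the type $\tau$ of the whole application rather than $G$'s type, and it is the codomain inequality in~\eqref{eq:type-degree} that delivers the bound. The whole argument is therefore a case analysis on the root of $F$.
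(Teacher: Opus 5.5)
Your proposal is correct and follows the paper's proof essentially verbatim: $\pass r$ preserves every root constructor other than application, so the contraction clause of~\eqref{eq:pass-application} must have fired at the root of $F$, and then $\lev(\tau)\le\lev(\sigma\to\tau)=r$ by~\eqref{eq:type-degree}. As you note yourself, no induction hypothesis is used, so the opening appeal to structural induction is better stated as a plain case analysis on the root of $F$.
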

\begin{proof}
Variables, literals, queries, and selections retain their root
constructors. Hence $F$ is an application, and the first clause
of~\eqref{eq:pass-application} was used at its root. The contracted
abstraction has some type $\sigma\to\tau$ of degree $r$.
Equation~\eqref{eq:type-degree} gives $\lev(\tau)\le r$.
\qed
\end{proof}

\begin{theorem}[One-pass degree elimination]\label{thm:lowering}
For every $r\ge1$ and every well-founded infinitary term $M$,
\[
 M\in\nclass{r+1}\quad\Longrightarrow\quad\pass r M\in\nclass r.
\]
\end{theorem}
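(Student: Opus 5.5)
We prove the implication by structural induction on the well-founded term $M$, with one induction hypothesis for each child, including each branch of a query or selection. Since $\pass r$ commutes with variables, literals, abstraction, $\query_i$ and $\caseop$, and membership in $\nclass{r+1}$ is inherited by subterms, every case except application follows at once. These constructs create no root redex: an abstraction, query or selection is not itself a redex. Hence $\pass r M\in\nclass r$ whenever all its immediate components are in $\nclass r$, which the induction hypotheses supply.

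The substantive case is $M=F\,V$ with $F:\sigma\to\tau$. By induction, $F'=\pass r F$ and $V'=\pass r V$ both lie in $\nclass r$. We split according to~\eqref{eq:pass-application}.

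\emph{Contraction case.} Here $F'=\lambda x^\sigma.U'$ and $\lev(\sigma\to\tau)=r$. Since $F'\in\nclass r$, its subterm $U'$ lies in $\nclass r$. Moreover $\lev(\sigma)<\lev(\sigma\to\tau)=r$. Equation~\eqref{eq:single-substitution-degree}, the instance of Lemma~\ref{lem:substitution-degree}, then gives $U'[V'/x]\in\nclass r$.

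\emph{No-contraction case.} Here $\pass r M=F'\,V'$. Both components are in $\nclass r$, so it remains only to show that the root, if it is a redex, has degree below $r$. Suppose $F'$ is an abstraction. If $\lev(\sigma\to\tau)=r$, the contraction clause would have fired, so $\lev(\sigma\to\tau)\ne r$. We must still exclude $\lev(\sigma\to\tau)>r$, and we split on whether $F$ was already an abstraction.
\begin{itemize}
\item If $F$ is an abstraction, then $F\,V$ is a redex of $M\in\nclass{r+1}$, so its degree is at most $r$, hence strictly below $r$.
\item If $F$ is not an abstraction, Lemma~\ref{lem:exposed-abstraction} applied to $F$, which has type $\sigma\to\tau$, gives $\lev(\sigma\to\tau)\le r$, hence $<r$.
\end{itemize}

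The main obstacle is this exposed-abstraction subcase. In the Lemma~\ref{lem:exposed-abstraction} branch, the abstraction exposed at the root of $F'$ comes from contracting a degree-$r$ redex whose codomain is the whole type $\sigma\to\tau$ of $F$. So the bound on the new root redex must come from the codomain inequality $\lev(\tau)\le\lev(\sigma\to\tau)$ in~\eqref{eq:type-degree}, which is exactly what the lemma packages. Throughout, we take care that the lemma is applied to $F$ at its own type $\sigma\to\tau$, not at $\tau$. The hypothesis $r\ge1$ is harmless: it only ensures the passes are meaningful, since level-zero arrows do not exist.
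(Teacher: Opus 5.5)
Your proof is correct and follows essentially the same route as the paper. You use structural induction with only the application case needing a root check: the contraction case is closed by \eqref{eq:single-substitution-degree} via strict domain descent, and a retained abstraction at the root is bounded by the input condition $F\,V\in\nclass{r+1}$ or by Lemma~\ref{lem:exposed-abstraction} applied at the type $\sigma\to\tau$ of $F$. The only difference is cosmetic: you split first on which clause of \eqref{eq:pass-application} fired, whereas the paper first establishes the degree bound $\le r$ and then splits on $<r$ versus $=r$.
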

\begin{proof}
Proceed by structural induction on $M$. Only an application $F\,V$
requires a root check. Put $F'=\pass r F$ and $V'=\pass r V$;
both belong to $\nclass r$ by induction. If $F'$ is not an abstraction,
rebuilding the application creates no root redex.

Otherwise write $F'=\lambda x^\sigma.U'$ of type $\sigma\to\tau$.
Its degree is at most $r$. When $F$ was already an abstraction, this
follows from the input condition $F\,V\in\nclass{r+1}$; otherwise
Lemma~\ref{lem:exposed-abstraction} gives the same non-strict bound.
If the degree is below $r$, the application is retained and satisfies
the required condition. If it equals $r$, the pass returns $U'[V'/x]$,
including when the abstraction was exposed by processing $F$ in this pass.
Strict domain descent gives $\lev(\sigma)<r$. Both $U'$ and $V'$
belong to $\nclass r$, so~\eqref{eq:single-substitution-degree} applies.

All other constructors use the induction hypotheses on their children,
including every member of a countable branch family. The induction is
on the whole well-founded term and does not require a common finite
bound on branch depth.
\qed
\end{proof}

\begin{lemma}[Semantic preservation of a pass]\label{lem:pass-sound}
For every $r$, term $M$, and valuation $\gamma$ for its context,
\[
 \deni{\pass r M}_\gamma=\deni M_\gamma.
\]
\end{lemma}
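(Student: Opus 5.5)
We prove the statement by structural induction on $M$, simultaneously for all contexts and valuations, after first establishing a substitution lemma for the denotation $\deni{\cdot}$. Fix a typed simultaneous substitution $\vartheta$ from $\Gamma$ to $\Delta$ and a valuation $\delta$ for $\Delta$. Write $\vartheta^\ast\delta$ for the valuation of $\Gamma$ assigning $\deni{\vartheta(x)}_\delta$ to each $x$. The substitution lemma states
\[
 \deni{M[\vartheta]}_\delta=\deni M_{\vartheta^\ast\delta}.
\]

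\textbf{Substitution lemma.} We prove it by structural induction on $M$.
\begin{itemize}
\item Variables hold by definition.
\item Literals are unaffected.
\item Query and selection nodes follow from the induction hypotheses on the selector and on every branch, using~\eqref{eq:infinitary-semantics}. In the selection case, the grafted functions $n\mapsto\deni{M_n[\vartheta]}_\delta$ and $n\mapsto\deni{M_n}_{\vartheta^\ast\delta}$ agree pointwise, so the grafts coincide. This needs function extensionality, which the Agda development presumably assumes.
\item Application follows from the two hypotheses and~\eqref{eq:dialogue-lambda}.
\item Abstraction is the only delicate case. We lift $\vartheta$ under the binder by fixing the fresh variable and weakening the old images, as in the proof of Lemma~\ref{lem:substitution-degree}. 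We then need an auxiliary weakening lemma: the denotation of a weakened term in an extended valuation equals its denotation in the original valuation. Weakening is itself a renaming, so we prove a renaming lemma first by the same induction, or treat renamings as a special case of substitutions by variables.
\end{itemize}
Specialising to the single substitution $[V/x]$ gives
\[
 \deni{U[V/x]}_\gamma=\deni U_{\gamma[x\mapsto\deni V_\gamma]}.
\]

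\textbf{Main induction.} With this in hand, we prove the statement by structural induction on $M$.
\begin{itemize}
\item Variables and literals are fixed by $\pass r$.
\item For abstraction, $\pass r(\lambda x.M)=\lambda x.\pass r M$. The induction hypothesis holds in every extended valuation $\gamma[x\mapsto a]$, and the two functions agree extensionally.
\item Query and selection nodes commute with $\pass r$ by~\eqref{eq:pass-structural}. The hypotheses for the selector and for every branch give equality of the graft data, hence of the trees.
\item For an application $F\,V$, put $F'=\pass r F$ and $V'=\pass r V$. The hypotheses give $\deni{F'}_\gamma=\deni F_\gamma$ and $\deni{V'}_\gamma=\deni V_\gamma$.
 \begin{itemize}
 \item In the \emph{otherwise} clause, the result $F'\,V'$ has denotation $\deni F_\gamma(\deni V_\gamma)$, which is $\deni{F\,V}_\gamma$.
 \item In the contraction clause, $F'=\lambda x.U'$. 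The substitution lemma gives
\[
 \deni{U'[V'/x]}_\gamma=\deni{U'}_{\gamma[x\mapsto\deni{V'}_\gamma]}=\deni{\lambda x.U'}_\gamma(\deni{V'}_\gamma).
\]
By the hypotheses this equals $\deni F_\gamma(\deni V_\gamma)=\deni{F\,V}_\gamma$.
 \end{itemize}
\end{itemize}
The level condition in~\eqref{eq:pass-application} plays no role here. Soundness holds for every $r$, which matches the statement.

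\textbf{Main obstacle.} We expect the substitution lemma to be the hard part, specifically the binder case with its renaming and weakening bookkeeping over infinitary branching syntax. The semantics at $\iota$ consists of inductive trees, and those at higher types are metatheoretic functions. Equalities under $\lambda$ and across countable branch families therefore require function extensionality. In the paper's setting this is among the explicit foundational assumptions.
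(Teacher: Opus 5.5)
Your proposal is correct and follows the same route as the paper's proof: a substitution lemma $\deni{M[\vartheta]}_\gamma=\deni M_\delta$ with $\delta(x)=\deni{\vartheta(x)}_\gamma$, proved by structural induction with weakening handled at binders, followed by structural induction on $M$ that uses the resulting $\beta$-equation at contracted roots. As in the paper, you invoke function extensionality for equality under abstractions and across branch families, and you need no degree hypothesis.
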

\begin{proof}
For a substitution $\vartheta$ from $\Gamma$ to $\Delta$ and a valuation
$\gamma$ for $\Delta$, structural induction on $M$ gives
\[
 \deni{M[\vartheta]}_\gamma=\deni M_\delta,
 \qquad \delta(x)=\deni{\vartheta(x)}_\gamma.
\]
Under a binder, extend both valuations by the same value; weakening
leaves the denotations of the old substitution images unchanged.
At query and selection nodes, apply the induction hypothesis to every branch.
In particular, substitution into an abstraction body has the same
interpretation as the corresponding $\beta$-redex. Now induct on $M$:
rebuilding a constructor preserves the denotation by the induction
hypotheses, and a root contraction additionally uses this $\beta$
equation. Equality under abstractions and of branch functions uses
function extensionality. No degree bound on $M$ is needed here.
\qed
\end{proof}

\subsection{Finite-Pass Normalisation}

Define the sequence of passes by recursion on $r$:
\begin{equation}\label{eq:passes}
 Q_0(M)=M,\qquad Q_{r+1}(M)=Q_r(\pass{r+1}M),
 \qquad N_t=Q_{K(t)}(M_t).
\end{equation}
Thus $Q_r$ executes the passes with degrees $r,\ldots,1$, in that order.

\begin{theorem}[Finite-pass normalisation]\label{thm:normalisation}
For every closed System~T term $t:(\iota\to\iota)\to\iota$, the term
$N_t$ is closed, ground, and $\beta$-normal, and $\deni{N_t}=D(t)$.
\end{theorem}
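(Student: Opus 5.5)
The plan is to combine Lemma~\ref{lem:uniform-ceiling}, Theorem~\ref{thm:lowering}, and Lemma~\ref{lem:pass-sound} by an induction on the number of passes, and then instantiate at $t$.

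First, we prove by induction on $r$ the auxiliary claim that for every term $M$,
\[
 M\in\nclass{r+1}\quad\Longrightarrow\quad Q_r(M)\in\nclass1 .
\]
The base case $r=0$ is immediate, since $Q_0(M)=M$. For the step, suppose $M\in\nclass{r+2}$. Theorem~\ref{thm:lowering} applies with degree $r+1\ge1$ and gives $\pass{r+1}M\in\nclass{r+1}$. The induction hypothesis then gives $Q_{r+1}(M)=Q_r(\pass{r+1}M)\in\nclass1$. The one point to watch is that Theorem~\ref{thm:lowering} requires $r\ge1$. This is why the recursion in~\eqref{eq:passes} uses the degrees $r+1,\ldots,1$ and never calls $\pass0$. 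It also explains why no pass of degree $0$ is needed: no redex has degree $0$, so $\nclass1$ is already exactly $\beta$-normality.

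In parallel, induction on $r$ with Lemma~\ref{lem:pass-sound} at each step shows
\[
 \deni{Q_r(M)}_\gamma=\deni M_\gamma
\]
for every valuation $\gamma$. No degree hypothesis is needed for this. We also record that $\pass r$, and hence each $Q_r$, preserves context and type, as noted after~\eqref{eq:pass-application}.

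Now instantiate at $M=M_t$. By~\eqref{eq:translated-dialogue}, $M_t$ is closed and of type $\iota$. By Lemma~\ref{lem:uniform-ceiling}, $M_t\in\nclass{K(t)+1}$. The auxiliary claim with $r=K(t)$ shows that $N_t=Q_{K(t)}(M_t)\in\nclass1$, so $N_t$ is $\beta$-normal. Preservation of context and type shows that $N_t$ is closed and ground. Finally, the semantic invariance together with~\eqref{eq:translated-dialogue} gives $\deni{N_t}=\deni{M_t}=D(t)$.

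I do not expect a genuine obstacle here, since the substantive work lies in Theorem~\ref{thm:lowering}. The only care needed is the bookkeeping of the off-by-one indexing between $\nclass{r+1}$ and $\pass r$. There is also a small point to check: the semantic equality at the closed term is the instance of Lemma~\ref{lem:pass-sound} with the empty valuation.
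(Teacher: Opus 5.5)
Your proposal is correct and follows the paper's proof essentially step for step. Both arguments use the induction $M\in\nclass{r+1}\Rightarrow Q_r(M)\in\nclass1$ via Theorem~\ref{thm:lowering} and instantiate it at $M_t$ using Lemma~\ref{lem:uniform-ceiling}. Closedness and groundness then come from preservation of context and type, and the equality $\deni{N_t}=D(t)$ comes from iterating Lemma~\ref{lem:pass-sound}.
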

\begin{proof}
Induction on $r$, using Theorem~\ref{thm:lowering}, gives
\[
 M\in\nclass{r+1}\quad\Longrightarrow\quad Q_r(M)\in\nclass1.
\]
For $r=0$ the premise is the conclusion; at $r+1$, first eliminate
degree $r+1$ and then apply the induction hypothesis. By
Lemma~\ref{lem:uniform-ceiling}, this applies to $r=K(t)$ and $M=M_t$.
Each pass preserves context and type, so $N_t$ is closed and ground.
Iterating Lemma~\ref{lem:pass-sound} gives
$\deni{N_t}=\deni{M_t}=D(t)$ by~\eqref{eq:translated-dialogue}.
\qed
\end{proof}

There are $K(t)$ passes, each acting on the entire infinitary term.
A pass may contract infinitely many redexes: the finite bound counts
passes, not elementary reductions. The common ceiling for all branches
comes from the finite source.

\begin{lemma}[Closed ground normal forms]\label{lem:ground-normal-forms}
The closed ground $\beta$-normal terms are exactly the well-founded
terms generated by
\[
 N::=\lit n\mid\query_i(n\mapsto N_n)
       \mid\caseop(N;\,n\mapsto N_n).
\]
\end{lemma}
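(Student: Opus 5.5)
We prove the two inclusions separately. Both arguments are structural inductions on well-founded terms. The easy direction shows that every term generated by the grammar is closed, ground and $\beta$-normal. The remaining direction is a classification by root constructor, and it needs a strengthened, context-general statement that we prove first.

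\emph{Generated terms are normal forms.} Literals are closed of type $\iota$. The typing rules for $\query_i$ and $\caseop$ give ground type in the empty context whenever the selector and all branches are closed ground terms. The grammar contains no application or abstraction, so no redex occurs anywhere in such a term. Hence it lies in $\nclass1$, by induction over the (possibly infinitely branching) term.

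\emph{Auxiliary lemma.} For the converse we generalise to arbitrary contexts and prove the following by structural induction on $M$:
\begin{quote}
if $\Gamma\vdash M:\sigma$ is $\beta$-normal and its root is neither an abstraction, a literal, a query nor a selection, then $M$ is a \emph{neutral} term $x\,M_1\cdots M_k$ with $x$ a variable of $\Gamma$.
\end{quote}
The case analysis runs as follows.
\begin{itemize}
\item A variable is neutral with $k=0$.
\item An application $F\,V$ is the only other case, and we analyse its head $F$:
 \begin{itemize}
 \item $F$ cannot be an abstraction, since then $F\,V$ would be a redex, of degree $\ge1$, contradicting membership in $\nclass1$.
 \item $F$ cannot be a literal, query or selection, because those have ground type $\iota$, while $F$ must have an arrow type.
 \item So $F$ falls under the induction hypothesis (membership in $\nclass1$ is inherited by subterms). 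It is therefore neutral, and so is $F\,V$.
 \end{itemize}
\end{itemize}
This head analysis, using typing to exclude ground-typed constructors in function position, is the only real step. It is routine once the statement is phrased for open terms.

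\emph{Normal forms are generated.} Now let $N$ be closed, ground and $\beta$-normal, and induct on $N$:
\begin{itemize}
\item $N$ is not an abstraction, since its type is $\iota$ and not an arrow.
\item $N$ is not neutral, since a neutral term has a head variable in the empty context.
\item Hence $N$ is a literal, a query or a selection. In the last two cases the selector and every branch are closed, of type $\iota$ by the typing rules, and normal as subterms. The induction hypothesis applies to each of them, including all countably many branches. Therefore $N$ is generated by the grammar.
\end{itemize}
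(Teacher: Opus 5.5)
Your proof is correct and follows essentially the paper's argument: both follow the application spine to its head and rule out each possibility. A variable is excluded by closedness, an abstraction because it would form a redex, and literals, queries and selections because they have ground type; structural induction on the remaining constructors then finishes the proof. Your neutral-term lemma for arbitrary contexts is a clean way to package this spine argument, but the generalisation is not actually needed: the function-position subterms of a closed term never pass under a binder and so remain closed, which is why the paper can argue directly in the empty context.
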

\begin{proof}
A term in the empty context cannot be a variable, and a ground term
cannot be an abstraction. Suppose a closed normal term is an
application. Follow its function-position subterms until reaching a
root that is not an application; well-foundedness ensures termination.
This head cannot be a variable, by closedness, or an abstraction, since
it would form a redex with its adjacent argument. Nor can it be a
literal, query, or selection, since these have ground type and cannot
occur in function position. This excludes applications. The remaining
constructors have closed ground children inheriting normality, so
structural induction gives the displayed grammar. Conversely, this
grammar introduces no $\beta$-redexes.
\qed
\end{proof}

Selections retain their grafting semantics. Section~\ref{sec:rank}
estimates their contribution to tree height along with that of queries.

\section{Ordinal Bounds}
\label{sec:rank}

We bound the initial rank of $M_t$, its growth under normalisation, and
the dialogue height of $N_t$. We identify natural numbers with finite
ordinals; addition and exponentiation are ordinary ordinal operations,
and $\alpha\ordjoin\beta$ denotes binary supremum.

\subsection{Rank and the Initial Translation}

Define the rank of an auxiliary term by structural recursion:
\begin{equation}\label{eq:term-rank}
 \begin{aligned}
  \rho(x)&=\rho(\lit n)=0,\\
  \rho(\lambda x^\sigma.M)&=\rho(M)+1,\\
  \rho(F\,V)&=(\rho(F)\ordjoin\rho(V))+1,\\
  \rho(\query_i(n\mapsto M_n))&=\bigl(\sup_n\rho(M_n)\bigr)+1,\\
  \rho(\caseop(M;\,n\mapsto M_n))
    &=\bigl(\rho(M)\ordjoin\sup_n\rho(M_n)\bigr)+1.
 \end{aligned}
\end{equation}
At every non-leaf node, the supremum of the immediate child ranks is a
common bound strictly below the rank of the parent. The position of the
successor is essential: child ranks $0,1,2,\ldots$ give parent rank
$\omega+1$, whereas the same sequence of child heights gives dialogue
height $\omega$ in~\eqref{eq:tree-height}.

To bound the starting rank, first count the constructors introduced by
selection at a function type. Write
$\sigma=\sigma_1\to\cdots\to\sigma_k\to\iota$.
Here $k$ counts successive argument positions; arrows inside the
$\sigma_i$ are not counted. The definition of $\caseop_\sigma$ introduces
one ground selection and one abstraction and application per argument.

\begin{lemma}[Selection rank]\label{lem:selection-rank}
For $\sigma=\sigma_1\to\cdots\to\sigma_k\to\iota$, if a selector $M$
and all branches $M_n$ have rank at most $\alpha$, then
\[
 \rho(\caseop_\sigma(M;\,n\mapsto M_n))\le\alpha+(2k+1).
\]
\end{lemma}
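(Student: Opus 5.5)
We argue by induction on $k$, strengthening the claim so that it applies uniformly to the shifted branch families created by~\eqref{eq:derived-selection}. The invariant we carry is: if $\sigma=\sigma_1\to\cdots\to\sigma_k\to\iota$ and the selector and all branches have rank at most $\alpha$, then the derived selection has rank at most $\alpha+(2k+1)$. The same $\alpha$ serves every branch, so no supremum over branches needs to be moved past the added finite part.

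For the base case $k=0$, the term is the ground selection $\caseop(M;\,n\mapsto M_n)$. By~\eqref{eq:term-rank} its rank is $\bigl(\rho(M)\ordjoin\sup_n\rho(M_n)\bigr)+1\le\alpha+1$, since each of $\rho(M)$ and $\rho(M_n)$ is at most $\alpha$. This is the claimed bound $\alpha+(2\cdot0+1)$.

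For the step, write $\sigma=\sigma_1\to\tau$ with $\tau=\sigma_2\to\cdots\to\sigma_k\to\iota$. Then $\caseop_\sigma(M;\,n\mapsto M_n)=\lambda x^{\sigma_1}.\caseop_\tau(M;\,n\mapsto M_n\,x)$. First we bound the new branches. Renaming and weakening do not change rank, because $\rho$ ignores types and contexts. Hence $\rho(M_n\,x)=(\rho(M_n)\ordjoin0)+1\le\alpha+1$. The selector $M$ has rank at most $\alpha\le\alpha+1$. So the hypothesis of the statement holds at $\tau$, which has $k-1$ argument positions, with rank bound $\alpha+1$. The induction hypothesis gives rank at most $(\alpha+1)+(2(k-1)+1)=\alpha+2k$ for the inner selection. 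The outer abstraction adds one, giving $\alpha+2k+1$.

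The only point needing care is ordinal arithmetic. The identity $(\alpha+1)+m=\alpha+(1+m)$ holds by associativity of ordinal addition, and $1+m$ is computed in $\N$ because $m$ is finite. Monotonicity of $\beta\mapsto\beta+1$ and of addition in the left argument justify passing the bounds through each step. We always add finite ordinals on the right of $\alpha$, so commutativity is never needed.
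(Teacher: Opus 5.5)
Your proposal is correct and follows the paper's proof essentially step for step: induction on $k$ (for arbitrary $\alpha$), the ground case via~\eqref{eq:term-rank}, the bound $\rho(M_n\,x)\le\alpha+1$ from weakening and the rank-zero fresh variable, and the induction hypothesis at $\alpha+1$, followed by the outer abstraction. Your closing arithmetic $(\alpha+1)+(2k-1)+1=\alpha+(2k+1)$, using associativity and finite addition on the right, matches the paper exactly.
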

\begin{proof}
Induct on $k$ using~\eqref{eq:derived-selection}. At $k=0$, ground
selection has rank at most $\alpha+1$. For $k>0$, weakening preserves
rank and the fresh variable has rank zero, so each new application
$M_n\,x$ has rank at most $\alpha+1$. The induction hypothesis for the
codomain gives $(\alpha+1)+(2k-1)$; the outer abstraction adds one,
giving $\alpha+(2k+1)$ by associativity and finite arithmetic.
\qed
\end{proof}

Define the \emph{source offset} $\mu(u)\in\N$ by recursion on the finite
extended source term $u$:
\begin{equation}\label{eq:source-offset}
 \begin{aligned}
 \mu(x)&=\mu(\Zero)=0,\qquad \mu(\Succ)=2,\qquad \mu(\Omega)=3,\\
 \mu(\Rec_\sigma)&=2k+4
   \quad(\sigma=\sigma_1\to\cdots\to\sigma_k\to\iota),\\
 \mu(\lambda x^\sigma.u)&=\mu(u)+1,\\
 \mu(u\,v)&=\max\{\mu(u),\mu(v)\}+1.
 \end{aligned}
\end{equation}
The recursor clause combines the selection overhead $2k+1$ with the
three outer abstractions of $R^\infty_\sigma$. For the closed functional
$t$, define
\begin{equation}\label{eq:initial-offset}
 m(t)=\mu(t\,\Omega).
\end{equation}
Thus $K(t)$ bounds the number of passes, whereas $m(t)$ bounds the
finite offset of the initial rank.

\begin{lemma}[Initial rank]\label{lem:initial-rank}
For every finite extended source term $u$,
$\rho(\widehat u)\le\omega+\mu(u)$. In particular,
\[
 \rho(M_t)\le\omega+m(t).
\]
\end{lemma}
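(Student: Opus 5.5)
The proof is a structural induction on the finite extended source term $u$. The constant cases are direct rank computations, and the abstraction and application cases follow from the recursive clauses of~\eqref{eq:source-offset} together with monotonicity of ordinal addition in its right argument. The only constant with infinitely many branches is $\Rec_\sigma$. There the bound $\omega$ is needed, and the finite offset $2k+4$ must absorb the remaining overhead.

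\emph{Non-recursor constants.} For variables and $\Zero$, $\rho(x)=\rho(\lit0)=0\le\omega$. For $\Succ$, every branch $\lit{n+1}$ has rank $0$ and the selector $z$ has rank $0$. The ground selection therefore has rank $1$, and the abstraction gives $2\le\omega+2$. For $\Omega$, each query $\query_i(n\mapsto\lit n)$ has rank $1$. The selection then has rank $(0\ordjoin1)+1=2$, and the abstraction gives $3\le\omega+3$.

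\emph{Recursor.} By induction on $n$, I show $\rho(I_n(f,a))\le 2n$; in particular it is finite. Indeed $\rho(I_0)=\rho(a)=0$. For the step, $f\,\lit n$ has rank $1$, so $I_{n+1}=(f\,\lit n)\,I_n$ has rank $(1\ordjoin 2n)+1\le 2n+2$. Consequently the selector $z$ and every branch have rank at most $\omega$, since $\sup_n 2n=\omega$. Lemma~\ref{lem:selection-rank} with $\alpha=\omega$ bounds the derived selection by $\omega+(2k+1)$. The three outer abstractions of~\eqref{eq:recursor-template} add $3$, giving $\rho(R^\infty_\sigma)\le\omega+(2k+4)=\omega+\mu(\Rec_\sigma)$. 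The point to check is that the bound is taken uniformly over the branches, so it is $\omega$ and not $\omega+1$. The successor in the selection-rank lemma is already counted in the $2k+1$.

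\emph{Inductive cases.} For $\widehat{\lambda x.u}=\lambda x.\widehat u$, the rank is $\rho(\widehat u)+1\le(\omega+\mu(u))+1=\omega+(\mu(u)+1)$, using associativity. For an application, $\rho(\widehat u\,\widehat v)=(\rho(\widehat u)\ordjoin\rho(\widehat v))+1$. Since $\omega+{-}$ is monotone, the join is at most $\omega+\max\{\mu(u),\mu(v)\}$, and adding $1$ gives the claim. Taking $u=t\,\Omega$ yields $\rho(M_t)\le\omega+m(t)$ by~\eqref{eq:initial-offset}.

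I expect no serious obstacle. The only delicate step is the recursor case: the finiteness estimate for $I_n$ must be combined with the supremum to get $\omega$ exactly, and the overhead must match $2k+4$.
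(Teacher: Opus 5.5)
Your proof is correct and follows the paper's argument essentially step for step. It uses structural induction on $u$ and the estimate $\rho(I_n(f,a))\le 2n$ to give the common branch bound $\omega$. It then applies Lemma~\ref{lem:selection-rank} and adds the three outer abstractions to reach $\omega+(2k+4)$, computes the $\Succ$ and $\Omega$ templates directly, and uses monotonicity and associativity of $\omega+(-)$ at abstractions and applications.
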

\begin{proof}
For the recursor clause, write $\sigma=\sigma_1\to\cdots\to\sigma_k\to\iota$.
Inside its template, $f,a,z$ are variables of rank zero.
Induction on $n$ gives $\rho(I_n(f,a))\le2n$: the base is zero, and
\[
 \rho(I_{n+1}(f,a))
  =(1\ordjoin\rho(I_n(f,a)))+1
  \le\max\{1,2n\}+1\le2(n+1).
\]
The branch ranks therefore have common bound $\omega$.
Lemma~\ref{lem:selection-rank} and the three outer abstractions give
\begin{equation}\label{eq:recursor-initial-rank}
 \rho(R^\infty_\sigma)\le\omega+(2k+4).
\end{equation}
The successor and oracle templates have ranks at most $2$ and $3$;
variables and zero have rank zero. These are the constant and variable
cases of~\eqref{eq:source-offset}.
For abstraction, taking successor adds one to the finite offset.
For application, use
$(\omega+a)\ordjoin(\omega+b)=\omega+\max\{a,b\}$, then take successor.
Structural induction on $u$ proves the bound with exactly the offset
in~\eqref{eq:source-offset};~\eqref{eq:initial-offset} gives the case $M_t$.
\qed
\end{proof}

\subsection{Rank Growth under Normalisation}

\begin{lemma}[Substitution rank]\label{lem:substitution-rank}
Renaming preserves rank. If every image of a typed simultaneous
substitution $\vartheta$ has rank at most $\beta$, then
\[
 \rho(M[\vartheta])\le\beta+\rho(M).
\]
In particular, $\rho(U[V/x])\le\rho(V)+\rho(U)$.
\end{lemma}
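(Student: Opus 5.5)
We prove the renaming claim first, then the main bound, both by structural induction on $M$, generalising over the substitution so that it can be lifted under binders. For renaming, a renaming sends variables to variables, so both sides have rank zero in the variable case. Every other constructor is rebuilt with the same shape, and the rank clauses in~\eqref{eq:term-rank} depend only on the children's ranks. Hence $\rho(M[\pi])=\rho(M)$ for every renaming $\pi$, including the lifted renamings used under binders and in every branch of a query or selection. This fact is what makes weakening of substitution images rank-preserving.

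For the main bound, fix $\beta$ and induct on $M$, for all substitutions $\vartheta$ whose images have rank at most $\beta$.
\begin{itemize}
\item \emph{Variable.} $\rho(x[\vartheta])=\rho(\vartheta(x))\le\beta=\beta+0$.
\item \emph{Literal.} The rank is $0\le\beta$.
\item \emph{Abstraction.} Lift $\vartheta$ by fixing the fresh bound variable and weakening the old images. The fresh variable has rank $0\le\beta$, and the weakened images keep rank at most $\beta$ by the renaming claim. The induction hypothesis then gives $\rho(M[\vartheta'])\le\beta+\rho(M)$, so
\[
 \rho\bigl((\lambda x.M)[\vartheta]\bigr)\le(\beta+\rho(M))+1=\beta+(\rho(M)+1)
\]
by associativity of ordinal addition.
\item \emph{Application, query, selection.} We use two facts about ordinal addition. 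First, $\beta+\gamma\le\beta+\delta$ whenever $\gamma\le\delta$, since addition is monotone in its right argument. Second, $\sup_n(\beta+\gamma_n)\le\beta+\sup_n\gamma_n$, which follows from the first fact because each $\gamma_n\le\sup_n\gamma_n$. Consequently $(\beta+\gamma)\ordjoin(\beta+\delta)\le\beta+(\gamma\ordjoin\delta)$, and similarly for countable suprema over the branch families. Taking the successor and applying associativity again yields $\beta+\rho(M)$. At query and selection nodes the induction hypothesis is applied to the selector and to every branch. This is legitimate because the induction is on the well-founded term, so no common bound on branch depth is required.
\end{itemize}

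The single-substitution case follows by taking $\vartheta$ to map $x$ to $V$ and every other variable to itself. The identity images have rank $0\le\rho(V)$, so we may take $\beta=\rho(V)$.

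The only delicate point is the direction of the inequalities. Ordinal addition is not commutative, and it is not continuous or strictly monotone in its left argument, so the bound must have the form $\beta+\rho(M)$ with $\beta$ on the left. Monotonicity in the right argument is exactly what the supremum step requires. In particular, we may not replace $\sup_n(\beta+\gamma_n)$ with $\beta+\sup_n\gamma_n$ as an equality, because only the inequality $\le$ is needed and only that direction is used. If the inequality were attempted with $\beta$ on the right, the supremum step would fail. Placing $\beta$ on the left avoids this.
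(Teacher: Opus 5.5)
Your proof is correct and follows essentially the same route as the paper: you prove renaming first, induct on $M$ over lifted substitutions, bound the substituted children by $\beta+S$ via right-monotonicity of ordinal addition, absorb the successor by associativity, and take $\beta=\rho(V)$ for single substitution. One small correction to your closing commentary: with $\beta$ on the right it is the successor step $(\gamma+\beta)+1\le(\gamma+1)+\beta$ that fails (e.g.\ $\gamma=1$, $\beta=\omega$), not the supremum step. Also, right-continuity of addition does make $\sup_n(\beta+\gamma_n)=\beta+\sup_n\gamma_n$ an equality, although, as you say, only $\le$ is needed.
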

\begin{proof}
Renaming preserves rank by structural induction, since it preserves
constructors and the ranks of all children. For substitution, induct
on $M$. Variables use the assumed bound and $\beta+0=\beta$; literals
have rank zero. Under a binder, the lifted substitution fixes the new
variable, of rank zero, and weakens the old images without changing
their ranks. The same bound $\beta$ therefore applies, and the
abstraction case follows by induction and associativity of addition.

For an application $F\,V$, put $S=\rho(F)\ordjoin\rho(V)$.
Both substituted children have rank at most $\beta+S$, whence
\[
 \rho((F\,V)[\vartheta])
 \le(\beta+S)+1=\beta+(S+1)=\beta+\rho(F\,V).
\]
For a query, put $S=\sup_n\rho(M_n)$. The induction hypotheses give
$\rho(M_n[\vartheta])\le\beta+\rho(M_n)\le\beta+S$ for every $n$.
Taking the supremum and then successor gives the same bound
$\beta+(S+1)$. For a selection, use
$S=\rho(M)\ordjoin\sup_n\rho(M_n)$ to bound both the substituted
selector and every substituted branch before taking successor.
Finally, single substitution has common image bound $\rho(V)$, since
the other variables are left fixed and have rank zero.
\qed
\end{proof}

Following Martin-L\"of's quantitative analysis~\cite[\S\S4--5]{martinlof1972},
we use, for every ordinal $\alpha$,
\begin{equation}\label{eq:exponential-room}
 2^\alpha+1\le 2^\alpha+2^\alpha=2^{\alpha+1}.
\end{equation}
The equality is the successor law for ordinal exponentiation;
the inequality uses $1\le2^\alpha$.

\begin{lemma}[Exponential growth bound]\label{lem:pass-rank}
For every $r\in\N$ and every well-founded infinitary term $M$,
\[
 \rho(\pass r M)\le2^{\rho(M)}.
\]
\end{lemma}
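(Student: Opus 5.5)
Proceed by structural induction on $M$, following the clauses of $\pass r$ in~\eqref{eq:pass-structural} and~\eqref{eq:pass-application}. Variables and literals are fixed and have rank $0\le 2^0=1$. For the constructors that $\pass r$ merely commutes with, the induction hypotheses bound each processed child by $2^{\rho(\text{child})}$. Monotonicity of $\alpha\mapsto2^\alpha$ then bounds every processed child by $2^{S}$, where $S$ is the supremum of the original child ranks; for query and selection this uses the supremum over the whole branch family. The rebuilt node therefore has rank at most $2^S+1$. Since $\rho(M)=S+1$ in each of these clauses, \eqref{eq:exponential-room} gives $2^S+1\le2^{S+1}=2^{\rho(M)}$.

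For an application $F\,V$, put $S=\rho(F)\ordjoin\rho(V)$, so $\rho(F\,V)=S+1$. By induction, $\rho(F')\le2^{\rho(F)}\le 2^S$ and $\rho(V')\le2^S$. In the non-contracting case the rebuilt application has rank $\le 2^S+1\le2^{S+1}$, exactly as above.

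The contracting case is the key step. Here $F'=\lambda x.U'$, so $\rho(U')+1=\rho(F')\le2^S$, and hence $\rho(U')<2^S$. Lemma~\ref{lem:substitution-rank} gives
\[
 \rho(U'[V'/x])\le\rho(V')+\rho(U')\le 2^S+\rho(U').
\]
Because $\rho(U')<2^S$, we have $2^S+\rho(U')<2^S+2^S=2^{S+1}$, which is the required bound. The point is that the abstraction node costs the one successor that we need, so the bound lands strictly inside the doubled room provided by~\eqref{eq:exponential-room}.

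The main obstacle is exactly this contraction case. A crude estimate $2^S+2^S$ would only give $\le2^{S+1}$ if the bound on $U'$ were non-strict, and such bounds could accumulate badly under iteration. So the strict inequality $\rho(U')<\rho(F')$ must be extracted explicitly from the rank clause for abstraction. The only ordinal facts needed are monotonicity of $2^{(\cdot)}$ and of addition in its right argument (including strict monotonicity there), the successor law, and the observation that $\sup$ commutes with the monotone bound. The degree index $r$ plays no role, so the lemma holds uniformly in $r$.
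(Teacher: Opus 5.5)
Your proof is correct and follows the paper's argument: structural induction with $S$ the supremum of the child ranks, the bound $2^S+1\le2^{S+1}$ for rebuilt nodes, and Lemma~\ref{lem:substitution-rank} with \eqref{eq:exponential-room} for contractions. The only difference is that the paper uses the non-strict $\rho(U')\le\rho(\lambda x.U')\le2^S$ to get $\rho(U'[V'/x])\le2^S+2^S=2^{S+1}$, which is already the claimed bound; your closing remark that the strict inequality $\rho(U')<\rho(F')$ \emph{must} be extracted, and that non-strict bounds would accumulate under iteration, is therefore mistaken, though harmless to your proof.
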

\begin{proof}
Induct on $M$. Variables and literals remain of rank zero, below
$2^0=1$. At a non-leaf node, let $S$ be the supremum of the
immediate child ranks, so that $\rho(M)=S+1$. By induction and
monotonicity of exponentiation, all processed children have rank at
most $2^S$.

If the pass rebuilds the node, its new rank is at most
$2^S+1\le2^{S+1}$. This includes the countable branching
cases, since $2^S$ bounds the entire family of processed children.
If an application is contracted, its processed function is
$\lambda x.U'$ and its processed argument is $V'$, both of rank at
most $2^S$. Since
$\rho(U')\le\rho(\lambda x.U')$, Lemma~\ref{lem:substitution-rank}
and~\eqref{eq:exponential-room} give
\[
 \rho(U'[V'/x])\le\rho(V')+\rho(U')
 \le2^S+2^S=2^{S+1}.
\]
The bound on $\rho(U')$ follows from that on the processed abstraction
$\lambda x.U'$, including when processing the function has exposed it.
\qed
\end{proof}

The degree argument in Section~\ref{sec:normalisation} determines how many
passes suffice. The rank estimate applies to every auxiliary term,
without a bound on its redex degrees.

\subsection{From Normal Forms to Tree Height}

Ground selection requires a bound on the effect of grafting.

\begin{lemma}[Grafting height]\label{lem:grafting-height}
Let $d\in\Tree\N$, $g:\N\to\Tree\N$, and let $b$ be an ordinal.
If $h(g(n))\le b$ for every $n$, then
\[
 h(d\bind g)\le b+h(d).
\]
\end{lemma}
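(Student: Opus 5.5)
We prove the bound by structural induction on $d$, with $b$ and $g$ held fixed throughout.

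\emph{Leaf case.} Here $d=\eta(x)$. By~\eqref{eq:tree-bind} we have $d\bind g=g(x)$. The hypothesis gives $h(g(x))\le b$, and $b=b+0=b+h(\eta(x))$ by~\eqref{eq:tree-height}.

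\emph{Node case.} Here $d=\beta_i(\varphi)$, and~\eqref{eq:tree-bind} gives $d\bind g=\beta_i(n\mapsto\varphi(n)\bind g)$. By~\eqref{eq:tree-height} its height is
\[
 \sup_n\bigl(h(\varphi(n)\bind g)+1\bigr).
\]
The induction hypothesis, applied to each child $\varphi(n)$, gives $h(\varphi(n)\bind g)\le b+h(\varphi(n))$. Adding one and using associativity of ordinal addition, each term of the supremum is at most $b+(h(\varphi(n))+1)$. Set $S=\sup_n(h(\varphi(n))+1)=h(d)$. Since ordinal addition is monotone in its right argument, every such term is bounded by $b+S$. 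Hence the supremum is at most $b+S=b+h(d)$.

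We use only monotonicity of $\alpha\mapsto b+\alpha$, which holds without any continuity assumption, so no limit-case analysis is needed. As in Lemma~\ref{lem:substitution-rank}, the bound is applied to each member of the countable family before taking the supremum. Putting $b$ on the left matters, because grafting stacks $g$'s trees \emph{below} $d$'s nodes. The proof is otherwise a routine calculation, and I do not expect a genuine obstacle. The only care needed is to avoid relying on any commutation of the supremum with left addition beyond the inequality $\sup_n(b+\alpha_n)\le b+\sup_n\alpha_n$, which follows from monotonicity alone.
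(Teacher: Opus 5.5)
Your proof is correct and matches the paper's argument: structural induction on $d$, with the leaf case from $b+0=b$. In the node case, each $h(\varphi(n)\bind g)+1$ is bounded by $(b+h(\varphi(n)))+1=b+(h(\varphi(n))+1)\le b+h(d)$ using associativity and left monotonicity, and then the supremum is taken.
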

\begin{proof}
Induct on $d$. At a leaf $\eta(n)$ the claim is the assumed bound on
$g(n)$, since $b+0=b$. At $d=\beta_i(\varphi)$, the induction
hypothesis gives, for each $n$,
\[
 \begin{aligned}
  h(\varphi(n)\bind g)+1
   &\le(b+h(\varphi(n)))+1\\
   &=b+(h(\varphi(n))+1)\le b+h(d).
 \end{aligned}
\]
Taking the supremum over $n$ proves the node case by
\eqref{eq:tree-height} and~\eqref{eq:tree-bind}.
\qed
\end{proof}

\begin{lemma}[Ground height]\label{lem:ground-height}
If $N$ is a closed ground $\beta$-normal term, then
\[
 h(\deni N)\le2^{\rho(N)}.
\]
\end{lemma}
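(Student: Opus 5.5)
We argue by structural induction on $N$, using the grammar of Lemma~\ref{lem:ground-normal-forms}. Every closed ground $\beta$-normal term is a literal, a query whose branches are closed ground normal terms, or a selection whose selector and branches are closed ground normal terms. The induction hypothesis therefore applies to every immediate child, including each member of a countable family. The only semantic inputs are the denotation clauses~\eqref{eq:infinitary-semantics}, the height equations~\eqref{eq:tree-height}, and Lemma~\ref{lem:grafting-height}.

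The literal case is immediate: $h(\eta(n))=0\le1=2^0$.

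For a query $\query_i(n\mapsto N_n)$, put $S=\sup_n\rho(N_n)$, so that the rank of the query is $S+1$. By induction and monotonicity of exponentiation, $h(\deni{N_n})\le2^{\rho(N_n)}\le2^S$ for every $n$. Hence
\[
 h(\deni{\query_i(n\mapsto N_n)})=\sup_n\bigl(h(\deni{N_n})+1\bigr)\le2^S+1\le2^{S+1},
\]
where the last step is~\eqref{eq:exponential-room}.

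For a selection $\caseop(N;\,n\mapsto N_n)$, put $S=\rho(N)\ordjoin\sup_n\rho(N_n)$, so that the rank is $S+1$. The induction hypothesis gives $h(\deni{N_n})\le2^S$ for all $n$ and $h(\deni N)\le2^S$. Since the denotation is $\deni N\bind(n\mapsto\deni{N_n})$, Lemma~\ref{lem:grafting-height} with $b=2^S$ yields
\[
 h\le2^S+h(\deni N)\le2^S+2^S=2^{S+1}.
\]
The last inequality uses right-monotonicity of ordinal addition, and the equality is the successor law in~\eqref{eq:exponential-room}.

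The main point to check is the selection case. Grafting adds heights in the order $b+h(d)$, with the branch bound on the left. Because both summands here are bounded by the same ordinal $2^S$, this order causes no difficulty. We only need weak monotonicity of $\beta+(\cdot)$ in its right argument, which holds classically.

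A second subtlety is that the induction must run over the inductive grammar rather than over all auxiliary terms. Lemma~\ref{lem:ground-normal-forms} guarantees that closed ground normality is inherited by the children of queries and selections. This is exactly what lets us apply the hypothesis to the selector and to every branch. With that in place the induction closes and gives $h(\deni N)\le2^{\rho(N)}$.
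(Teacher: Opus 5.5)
Your proof is correct and follows the paper's own argument step for step. Both induct over the normal-form grammar, bound the query case by $2^S+1\le 2^{S+1}$, and handle selection with the grafting lemma at $b=2^S$ followed by $2^S+2^S=2^{S+1}$.
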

\begin{proof}
Induct on the grammar of Lemma~\ref{lem:ground-normal-forms}.
A literal has height zero and rank zero. For
$N=\query_i(n\mapsto N_n)$, put $S=\sup_n\rho(N_n)$.
The induction hypotheses bound every child height by $2^S$, so
\[
 h(\deni N)\le2^S+1\le2^{S+1}=2^{\rho(N)}.
\]
For $N=\caseop(M;\,n\mapsto N_n)$, put
$S=\rho(M)\ordjoin\sup_n\rho(N_n)$. The selector and every branch
have denotations of height at most $2^S$. The semantics
\eqref{eq:infinitary-semantics} and Lemma~\ref{lem:grafting-height}
give
\[
 h(\deni N)\le2^S+h(\deni M)
 \le2^S+2^S=2^{S+1}=2^{\rho(N)}.
\]
Both node cases use~\eqref{eq:exponential-room}.
\qed
\end{proof}

\subsection{The Height Bound}

Write $E_2(\alpha)=2^\alpha$ and let its superscript denote function
iteration:
\[
 E_2^0(\alpha)=\alpha,\qquad
 E_2^{n+1}(\alpha)=2^{E_2^n(\alpha)}.
\]

\begin{proof}[of Theorem~\ref{thm:main}]
Put $m=m(t)$ as defined in~\eqref{eq:initial-offset}.
Iterating Lemma~\ref{lem:pass-rank} through the $K(t)$ passes
in~\eqref{eq:passes} gives
\[
 \rho(N_t)\le E_2^{K(t)}(\rho(M_t))
             \le E_2^{K(t)}(\omega+m).
\]
Theorem~\ref{thm:normalisation} and Lemma~\ref{lem:ground-height}
therefore give the more precise bound
\begin{equation}\label{eq:fine-height-bound}
 h(D(t))=h(\deni{N_t})\le2^{\rho(N_t)}
                    \le E_2^{K(t)+1}(\omega+m).
\end{equation}

Since $2^\omega=\sup_{n<\omega}2^n=\omega$, the exponent laws
$a^{\beta+\gamma}=a^\beta\cdot a^\gamma$ and
$a^{\beta\cdot\gamma}=(a^\beta)^\gamma$ yield
\[
 \begin{aligned}
  E_2(\omega+m)&=\omega\cdot2^m,\\
  E_2^2(\omega+m)&=\omega^{2^m}<\omega^\omega=\tow1,
 \end{aligned}
\]
where $2^m$ is finite. Induction on $n\ge2$ now gives
$E_2^n(\omega+m)<\tow{n-1}$: at the successor step,
monotonicity in the base and strict monotonicity of $\omega$
exponentiation give
\[
 E_2^{n+1}(\omega+m)
 \le\omega^{E_2^n(\omega+m)}
 <\omega^{\tow{n-1}}=\tow n.
\]
Since $K(t)\ge2$, the inductive bound at $n=K(t)+1$ together
with~\eqref{eq:fine-height-bound} gives
\begin{equation}\label{eq:final-height-bound}
 h(D(t))<\tow{K(t)}.
\end{equation}
Finally, $\tow0<\tow1$, and strict monotonicity of exponentiation
gives $\tow n<\tow{n+1}$ by induction. Hence
$\tow n<\tow{n+1}\le\sup_j\tow j=\eps$ for every $n$, completing
the bound.
\qed
\end{proof}

The tower supremum is the least fixed point of $\alpha\mapsto\omega^\alpha$.
Indeed, continuity gives
\[
 \omega^{\eps}=\sup_n\omega^{\tow n}=\sup_n\tow{n+1}=\eps.
\]
If $\omega^\alpha=\alpha$, positivity gives $1\le\alpha$, hence
$\omega\le\alpha$. Induction bounds every $\tow n$ by $\alpha$,
so $\eps\le\alpha$.

For $t_\omega$ from Example~\ref{ex:omega-height},
\eqref{eq:source-ceiling}--\eqref{eq:source-K} give $K(t_\omega)=2$.
The general bound is therefore $h(D(t_\omega))<\tow2$;
the direct calculation gives its exact height $\omega$.
The parameter $K(t)$ bounds type levels throughout the source syntax.
We do not claim that the bound is sharp.

\subsection{The Combinatory Interpretation}
\label{sec:combinatory}

The original dialogue interpretation~\cite{escardo2013} uses combinatory
System~T: terms are built by application from zero, successor, typed
combinators $\mathsf K,\mathsf S$, and an iterator
$\mathsf{Iter}_\sigma:(\sigma\to\sigma)\to\sigma\to\iota\to\sigma$.
Write $D_{\mathrm c}(s)$ for its specified tree and $E(s)$ for its
finite $\lambda$-encoding. The encoding identifies the simple-type
grammars, preserves zero, successor, and application, and sets
\[
 \begin{aligned}
 E(\mathsf K)&=\lambda x\,y.x,\qquad
 E(\mathsf S)=\lambda f\,g\,x.f\,x\,(g\,x),\\
 E(\mathsf{Iter}_\sigma)
   &=\lambda f^{\sigma\to\sigma}.\Rec_\sigma(\lambda i^\iota.f).
 \end{aligned}
\]
\begin{corollary}[Original combinatory interpretation]
\label{cor:combinatory-bound}
For every closed combinatory System~T term
$s:(\iota\to\iota)\to\iota$,
\[
 D_{\mathrm c}(s)=D(E(s)),
 \qquad
 h(D_{\mathrm c}(s))<\tow{K(E(s))}<\eps.
\]
\end{corollary}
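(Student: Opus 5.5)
The plan is to prove $D_{\mathrm c}(s)=D(E(s))$ by a logical-relation-free argument: show that the combinatory dialogue interpretation of every (possibly open-free, since combinatory terms have no variables) combinatory term coincides with the $\lambda$-dialogue interpretation of its encoding. Concretely, I will prove by structural induction on combinatory terms $s:\sigma$ that $\dend{E(s)}=\llbracket s\rrbracket^{D}_{\mathrm c}$ as elements of $V_\sigma$, where the value spaces agree because the encoding identifies the type grammars and both interpretations use $V_\iota=\Tree\N$ and full function spaces. The height bound then follows immediately: with $E(s):(\iota\to\iota)\to\iota$ closed, Theorem~\ref{thm:main} gives $h(D(E(s)))<\tow{K(E(s))}<\eps$, and we rewrite along the tree equality.

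The induction has an application case, which is immediate since both interpretations send application to semantic application, and constant cases. Zero and successor are interpreted identically by \eqref{eq:dialogue-zero-succ} in both settings. For $\mathsf K$ and $\mathsf S$, unfolding \eqref{eq:dialogue-lambda} on $\lambda x\,y.x$ and $\lambda f\,g\,x.f\,x\,(g\,x)$ yields exactly the functions $a\mapsto b\mapsto a$ and $F\mapsto G\mapsto a\mapsto F(a)(G(a))$, which are the combinatory clauses; equality of these functions uses function extensionality, as in Lemma~\ref{lem:pass-sound}. Finally $D_{\mathrm c}(s)=\llbracket s\rrbracket^{D}_{\mathrm c}(\gen)=\dend{E(s)}(\gen)=D(E(s))$, since both specified trees apply the value to the same generic input \eqref{eq:generic-input}.

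The main obstacle is the iterator clause, where the two interpretations must agree exactly as trees, not merely extensionally in evaluation. Escard\'o's combinatory interpretation sets $\llbracket\mathsf{Iter}_\sigma\rrbracket(F)(A)(d)=d\bind_\sigma(n\mapsto F^n(A))$ using the same generalised Kleisli extension \eqref{eq:typed-bind}. On the other side, $\dend{E(\mathsf{Iter}_\sigma)}(F)=\dend{\Rec_\sigma}(\lambda\_.F)$, so by \eqref{eq:dialogue-rec} it maps $A,d$ to $d\bind_\sigma(n\mapsto v_n)$ with $v_0=A$, $v_{n+1}=F(v_n)$, since the step ignores its index argument $\eta(n)$. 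A one-line induction on $n$ gives $v_n=F^n(A)$, and the two graftings coincide. The care needed here is to check that the original paper's Kleisli extension at function types is literally the pointwise one of \eqref{eq:typed-bind}. If it were phrased differently, for example via a separate extension operator defined on the type structure, I would first prove by induction on $\sigma$ that the two extensions agree, and then conclude.
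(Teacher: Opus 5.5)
Your argument is correct. Its skeleton matches the paper's: structural induction over combinatory terms, the iterator case by induction on $n$ using that the step in $E(\mathsf{Iter}_\sigma)$ ignores its index, then evaluation at the common generic input and an appeal to Theorem~\ref{thm:main}. The difference is the invariant carried through the induction.

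The paper does not identify the two value-space families. It introduces a logical relation $R_\sigma\subseteq V_\sigma\times W_\sigma$, with equality at $\iota$ and the usual arrow clause above it. It proves that typed grafting onto a common tree preserves relatedness of pointwise related branch families, and it relates the iterates of related arguments. This yields $R_\sigma(\dend{E(s)},\llbracket s\rrbracket^{\mathrm c})$, and the tree equality follows by applying the arrow clause to the related generic inputs.

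You instead identify $W_\sigma$ with $V_\sigma$ and prove $\dend{E(s)}=\llbracket s\rrbracket^{\mathrm c}$ outright. The cost is function extensionality at every arrow type; in return, the grafting lemma becomes mere congruence. Once $W_\sigma=V_\sigma$ and function extensionality are granted, $R_\sigma$ is provably equality by induction on $\sigma$. So the two proofs agree mathematically, and yours is the shorter one on paper.

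The relation buys robustness. It never needs equality of higher-type functions, and it does not require the two semantic universes to be literally the same. In the Agda development they are indexed by two different simple-type datatypes joined by an explicit map. Your equation would only typecheck there after a transport, whereas the relation states the comparison directly. The relation would also absorb your contingency about a differently phrased Kleisli extension, since only agreement on related arguments is required.
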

\begin{proof}
Write $W_\sigma$ for the combinatory dialogue value spaces, defined as
in~\eqref{eq:dialogue-values}, and $\llbracket s\rrbracket^{\mathrm c}\in W_\sigma$
for their interpretation. Define $R_\sigma\subseteq V_\sigma\times W_\sigma$ by
\[
 \begin{aligned}
 R_\iota(d,e)&\iff d=e,\\
 R_{\sigma\to\tau}(f,g)&\iff
  \forall a\in V_\sigma\,\forall b\in W_\sigma.\,
  R_\sigma(a,b)\Rightarrow R_\tau(f(a),g(b)).
 \end{aligned}
\]
Induction on $\sigma$ shows that pointwise related branch families
remain related after typed grafting onto the same ground tree.
For related functions and initial values, induction on $n$ relates their
$n$-fold iterates. The recursor step in $E(\mathsf{Iter}_\sigma)$ ignores
its index, so this gives the iterator case after grafting.
The other constants satisfy the relation directly, and application uses
its arrow clause. Structural induction therefore gives
$R_\sigma(\dend{E(s)},\llbracket s\rrbracket^{\mathrm c})$.
The generic inputs are related because they preserve tree equality.
Applying the arrow clause yields $D(E(s))=D_{\mathrm c}(s)$;
Theorem~\ref{thm:main} gives the height bound.
\qed
\end{proof}

\section{Mechanisation}
\label{sec:mechanisation}

We formalise the constructions and proofs of
Sections~\ref{sec:translation}--\ref{sec:rank} in Agda~\cite{norell2007},
using TypeTopology~\cite{typetopology}. The syntax is indexed by context and
type, with de Bruijn variables and typed renaming and substitution.
An explicit map between the library's two simple-type datatypes
implements the adapter of Corollary~\ref{cor:combinatory-bound}.

Following univalent foundations~\cite{hott2013}, ordinals have carriers
in $\mathcal{U}_0$. The ordinal modules assume univalence, propositional
truncation, set replacement, and excluded middle for propositions in
$\mathcal{U}_1$. Syntactic substitution and degree elimination require
no classical assumptions. The semantic modules take function
extensionality, derived from univalence in the final assembly.
All local modules use \texttt{-{}-safe} and \texttt{-{}-without-K},
with assumptions as explicit module parameters.
\ifdefined\FossacsPreprint
\else
The Data Availability
Statement identifies the accompanying code and reproduction instructions.
\fi

\section{Conclusion}
\label{sec:conclusion}

We have proved $h(D(t))<\tow{K(t)}<\eps$ with $K(t)$ computed from the
finite System~T source. Exact tree preservation connects finite-pass
normalisation to dialogue height. The Agda formalisation covers both
the $\lambda$-calculus and original combinatory interpretations.

It remains to determine whether these heights are cofinal in $\eps$
and to develop a constructive ordinal-code version of the analysis.
Connections between constructive ordinal representations~\cite{kraus2021}
offer a starting point for the latter.

\clearpage
\ifdefined\FossacsPreprint
\else
\paragraph{Data Availability Statement.}
The Agda formalisation and reproduction instructions are provided in the
anonymised supplementary archive \texttt{dialogue-height-artifact.zip}
accompanying this manuscript.
The README specifies the Agda version, the pinned TypeTopology revision,
the checking commands, and the correspondence between the paper's results
and their formal statements.
\fi

\bibliographystyle{splncs04}
\bibliography{references}
\end{document}